\pgfplotsset{compat=1.9,legend style ={font=\footnotesize}, width=7cm, axis lines = left, yminorticks=false}
\newcommand{\T}{^{\rm {T}}}
\newcommand{\E}[1]{\mathbb{E}\left\{ #1 \right\}}
\newcommand{\mc}[1]{\mathcal{#1}}
\newcommand{\mb}[1]{\mathbf{#1}}
\newcommand{\code}[1]{\texttt{#1}}
\DeclareMathOperator*{\argmax}{arg\,max}
\DeclareMathOperator*{\argmin}{arg\,min}
\newcommand{\FS}[1]{\textcolor{black}{ {#1}}}
\newcommand{\mm}[1]{\textcolor{black}{ {#1}}}
\renewcommand{\qed}{\hfill \ensuremath{\Box}}
\newtheorem{proposition}{Proposition}
\title{Power Minimization of Downlink Spectrum Slicing for eMBB and URLLC Users} 
\author{Fabio Saggese \\
\textit{Dept. of Electronic System} \\
\textit{Aalborg University}\\
Aalborg, Denmark \\
\texttt{fasa@es.aau.dk}
\And
Marco Moretti \\
\textit{Dept. of Information Engineering} \\
\textit{University of Pisa}\\
Pisa, Italy \\
\code{marco.moretti@unipi.it}
\And
Petar Popovski \\
\textit{Dept. of Electronic Systems} \\
\textit{Aalborg University}\\
Aalborg, Denmark \\
\code{petarp@es.aau.dk}
}
\date{}
\begin{document}
\maketitle

\begin{abstract}
    5G technology allows heterogeneous services to share the wireless spectrum within the same radio access network. In this context, \emph{spectrum slicing} of the shared radio resources is a critical task to guarantee the performance of each service. We analyze a downlink communication serving two types of traffic: enhanced mobile broadband (eMBB) and ultra-reliable low-latency communication (URLLC). Due to the nature of low-latency traffic, the base station knows the channel state information (CSI) of the eMBB users while having statistical CSI for the URLLC users. We study the power minimization problem employing orthogonal multiple access (OMA) and non-orthogonal multiple access (NOMA) schemes. Based on this analysis, we propose a lookup table-based approach and a block coordinated descent (BCD) algorithm. We show that the BCD is optimal for the URLLC power allocation. The numerical results show that NOMA leads to lower power consumption than OMA, except when the average channel gain of the URLLC user is very high. For the latter case, the optimal approach depends on the channel condition of the eMBB user. Even when OMA attains the best performance, the gap with NOMA is negligible, showing the capability of NOMA to reduce power consumption in practically every condition.
\end{abstract}
\keywords{NOMA \and RAN slicing \and eMBB \and URLLC \and Power saving}

\section{Introduction}
\label{sec:intro}
The plethora of new services promised by 5G and beyond systems calls for the coexistence of very heterogeneous types of traffic on the same physical network.
Some services pose strict requirements in terms of latency, others in terms of high reliability or of the huge number of devices connected to the network, while the majority of \emph{traditional} services still require high bandwidth and data rate.
To address the complexity of such a vast space of requirements, 5G standardized three generic service types: enhanced mobile broadband (eMBB), massive machine-type communications (mMTC), and ultra-reliable low-latency communications (URLLC)~\cite{3gpp:access}. Note that a realistic service could require any combination of the aforementioned generic service types.
At the network level, \emph{network slicing} deals with the partitioning of the physical network infrastructure into different end-to-end isolated virtual networks able to support specific service requirements for the various use cases~\cite{Zhang2017}. Similarly, at the physical layer \emph{spectrum slicing} deals with seamlessly allocating the radio spectrum to serve users with heterogeneous requirements and is a fundamental Radio Access Network (RAN) task.

The problem of physical resource allocation for RAN slicing has been addressed in~\cite{Doro2019}, where the authors allocate resource blocks (RB) to different base stations (BS) to meet the demand of mobile network operators. This work gave one of the first formalizations of the RAN slicing problem, even if it did not take into account any specific physical layer requirements. 
Henceforth, several works treated the problem of resource allocation algorithms to multiplex eMBB and URLLC services. In~\cite{Anand2020}, the joint resource allocation problem for eMBB-URLLC slicing is addressed by employing different puncturing models, where the reliability of the URLLC transmission is always considered met. The authors of~\cite{Elsayed2019} and~\cite{Alsenwi2021} propose two deep reinforcement learning techniques to allocate resources to the two services, employing orthogonal resources and preemption/puncturing, respectively. Perfect knowledge of CSI is a crucial prerequisite for these algorithms to work properly. 
\FS{The authors of~\cite{Huang2022deluxe} propose a machine learning solution based on link adaptation to minimize the impact of URLLC traffic on the eMBB transmission.}
All these works adopt the current OMA 5G standard for dynamic resource sharing between eMBB and URLLC, either by the use of puncturing or by employing non-overlapping time/frequency resources~\cite{3gpp:access}.
However, NOMA has proven to outperform the OMA in many applications~\cite{Xu2015} and can also be applied to spectrum slicing, as introduced in the uplink communication framework presented in~\cite{Popovski2018}. 

In general, most recent works on spectrum slicing for NOMA communications, such as \cite{Kalor2019}\nocite{Li2019}-\cite{Chiarotti21},  focus on the \emph{uplink} direction because of the simpler implementation of the successive interference cancellation (SIC) strategy.
The work in~\cite{Kalor2019} compares the performance of NOMA and OMA for URLLC devices with different latency requirements and limited feedback by the receiver.
In~\cite{Li2019}, a reinforcement learning algorithm decides whenever to use OMA or NOMA for dynamic multiplexing of eMBB-URLLC data streams, allocating the transmission power on the information of the channel state. 
The authors in~\cite{Chiarotti21} investigate the use of OMA and NOMA for intermittent and broadband services. The reliability is enforced by employing packet coding on the binary erasure channel, while the latency is modeled, taking into account reliability-latency and age of information metrics.

\subsection{Contributions}
In this paper, we analyze the problem of slicing the spectrum for eMBB and URLLC traffics in the \emph{downlink} direction of a NOMA system and compare the performance with a  more traditional approach based o orthogonal multiplexing.
\mm{Our research addresses a simplified scenario with two users only; considering the inherent complexity of the problem, the goal of this paper is to outline the most important relationships between the two types of traffic. In an effort of abstraction, this setting, although simple, preserves all the most significant factors that affect the problem of coexistence of the eMBB and URLLC traffics in the downlink of a NOMA system.} 

\mm{The eMBB service aims to maximize its throughput without any latency requirements; on the contrary, the URLLC service demands mission-critical, reliable communication with a hard latency constraint. 
Considering the intermittent nature and the stringent latency constraints, the estimation of the instantaneous CSI is infeasible for URLLC packets, and we assume to possess only the knowledge of statistical CSI. Accordingly, the performance of the URLLC user is expressed in terms of its \emph{outage probability}, which depends on the number of resources selected, the spectral efficiency of the transmission, and the mean signal-to-noise ratio at the receiver. On the other hand, the instantaneous CSI for the eMBB user is available, and its performance is expressed in terms of achievable rate.}

\mm{With respect to the existing literature on the subject, employing a NOMA approach to the problem of downlink RAN slicing may offer significant gains but also presents many challenges. In the first place, assuming heterogeneous requirements among users dictates a new approach to SIC for NOMA. Rather than following an optimal cancellation order based on channel quality, we are constrained by the strict latency requirements of the URLLC traffic to always apply SIC at the eMBB receiver.
Secondly, the propagation conditions for URLLC transmissions are so adverse due to the presence of eMBB interference and the lacking knowledge of instantaneous CSI that NOMA can be reasonably applied only in a multi-carrier approach that exploits the frequency diversity of independent parallel channels.}
\mm{As a consequence, the derivation of the outage probability for the URLLC traffic, similar to the problem addressed in~\cite{Wei2017mcnoma} for a NOMA system with imperfect CSI,  is a difficult task that can not be solved by applying the existing literature either in closed~\cite{Yilmaz2010} or approximated form~\cite{Bai2013, Coon2015} and requires a novel approach.}

\mm{As in most scenarios where interference plays a major role, the total consumed power is a critical indicator to evaluate the performance of the system. Accordingly, we formulate the problem of eMBB-URLLC spectrum slicing with the objective of minimizing the overall power under the conflicting constraints dictated by the two types of traffic, i.e.,  reliable URLLC transmissions and large spectral efficiency for eMBB traffic.
Numerical results investigate various aspects of the interaction of the two types of traffic and compare NOMA with OMA, the more traditional orthogonal partitioning of the radio resources. In the majority of the cases NOMA outperforms OMA being capable to reduce power consumption in practically every condition.}

The main contributions of this paper can be summarized as follows:
\begin{itemize}
    \item \mm{We propose a novel approach to downlink  RAN slicing of eMBB and URLLC traffics employing non-orthogonal access technology. Slicing is formulated as a power minimization problem subject to the diverse constraints of the two types of traffic.}
    \item \mm{To address the non-convexity of the optimization problem, we propose a novel heuristic, which follows a hierarchically layered approach:  eMBB traffic is allocated first and then URLLC. This choice is motivated by a practical observation: due to the stringent latency constraints of URLLC, interference cancellation can be applied only at the eMBB receivers, which, accordingly, operate virtually in the absence of interference; for this reason, eMBB resources need to be allocated first.}
    \item \mm{The allocation of URLLC is by far the most challenging of the two allocation problems, being plagued by the presence of the interference of the eMBB traffic and being hindered by the absence of instantaneous CSI. To address this specific problem, we propose two novel algorithms: a) a scheme based on block coordinated descent (BCD) able to converge to the optimal solution and b)  a much lower-complexity strategy based on the use of lookup tables, which obtains results very close to the bound represented by the BCD scheme.}
    \item \mm{We show by numerical simulation that in our setting NOMA outperforms OMA, i.e., that by use of NOMA we can fully exploit the frequency diversity of the system, overcoming some of the shortcomings relative to the latency requirements of the URLLC traffic.}
\end{itemize}

\paragraph*{Paper outline} The remainder of the paper is organized as follows. In Section~\ref{sec:model} the environment and the signal model are described. In the same Section, we focus also on the overall information transmittable and the possible outage events that may occur to the users. In Section~\ref{sec:allocation}, the allocation process is described. We firstly present the minimization power problem; then, we show how to split the problem to obtain low-complexity solutions.
In Section~\ref{sec:algorithms}, we describe the allocation strategy. Results are presented in Section~\ref{sec:results} and conclusions in Section~\ref{sec:conclusions}.

\paragraph*{Mathematical notation} We use capital italic letter to represent sets, e.g. $\mc{A}$, and their cardinality is denoted with capital letter, e.g. $A$. We denote the operator $\mc{A} \setminus \mc{B} = \{x \in \mc{A}, x \notin \mc{B}\}$. Vector are presented in bold uppercase letters $\mb{P}$; the symbol $\succeq$ represent the element-wise comparison between vectors. $\mb{1}$ and $\mb{0}$ are the vectors composed by ones or zeros everywhere, respectively. The symbol $\mc{CN}(a,B)$ represents complex Gaussian distribution with mean value $a$ and variance $B$, while $\text{Exp}(a)$ represents the exponential distribution with mean value $a$.

\section{System Model}
\label{sec:model}
We study a 5G-like single-cell downlink scenario. 
\mm{To capture the dynamics of the slicing problem, which is inherently complex, and to focus our research on the most important factors that affect the coexistence of the two types of traffic, we consider a simplified scenario composed of two users}: an eMBB user, denoted by index $e$ and a URLLC user, denoted by the index $u$. The available resources are organized in a time-frequency grid. In the time domain, we consider a single time slot of duration $T$ [s]. For low-latency communications, the time slot is further divided into a set of mini-slots $\mc{M} = \{0, \dots, M-1\}$, $|\mc{M}| = M$, each one of duration $T_m=T/M$ [s].
We assume that the coherence time of the channel $T_c$ is $T_c \ge T$ so that the channel gains can be assumed constant during an entire time slot.
In the frequency domain, we consider a set $\mc{F} = \{0, \dots, F - 1\}$, $|\mc{F}| = F$ of orthogonal frequency resources, each occupying a bandwidth $\Delta_f$ [Hz].
We refer to a single time-frequency resource as a \emph{mini resource block} (mRB), where the term ``mini'' is used to highlight the reduced dimension in time to a conventional resource block.

\begin{figure}
    \centering
    \includegraphics[width=6.5cm]{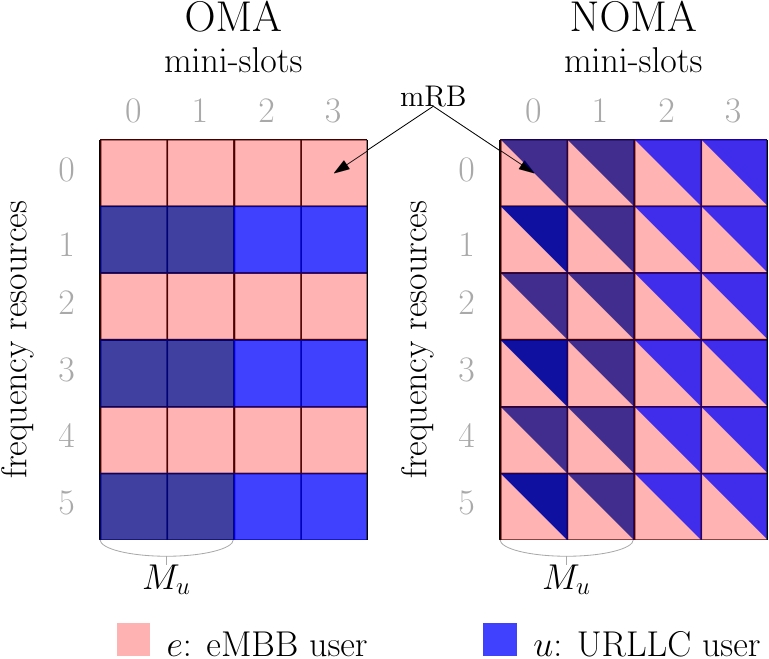}
    \caption{A toy example with a resource grid of $F = 6$ frequency channels and $M = 4$ mini-slots. In the first case,  $F_u = 3$ channels, $\mc{F}_u = \{1, 3, 5\}$, are reserved for URLLC traffic in an OMA paradigm. In the second  case, all the mRBs are reserved for $u$ and $e$, i.e., $F_u = 5$ following a NOMA approach. The URLLC packet is transmitted using the first $M_u = 2$ mini-slots for both OMA and NOMA.}
    \label{fig:resource_grid}
    \vspace{-0.6cm}
\end{figure}

The two users have different objectives and constraints: the eMBB user is modeled as transmitting a pipeline of $N_e$ data bits per slot, while the URLLC user has to meet specific requirements in terms of latency and reliability. In detail, we assume that a packet containing $N_u$ data bits must be delivered within $T_u$ seconds with an outage probability lower than $\epsilon_u$. 
The latency constraints are expressed as a function of the \emph{edge delay}, i.e., the delay between the time at which the message arrives at the transmitter and the time at which the message is effectively transmitted to the user, assuming that all other delay terms have already subtracted from $T_u$~\cite{She2017}. Without loss of generality, the tolerable latency is expressed as a maximum number of mini-slots $M_u^{\max}$.

Resources are assigned to the users based on their traffic type (URLLC or eMMB) and the multiple access technologies adopted (OMA or NOMA). 
Following the 5G NR standard~\cite{3gpp:access}, the URLLC data transmission is allowed to span a certain number of mini-slots, due to the critical time communication. On the other hand, the resource allocation for eMBB transmission can only span the entire slot, in order to support a high communication rate.   

To formalize this concept, we denote as $\mc{F}_u \subseteq \mc{F}$, $|\mc{F}_u| = F_u$, and $\mc{M}_u \subseteq \mc{M}$, $|\mc{M}_u| = M_u$, the sets of frequency and temporal resources allocated for the \emph{transmission} of the data stream of user $u$. 
It is worth noting that the URLLC tolerable latency constraints the transmission time $M_u$, while its $F_u$ resources are reserved for the entire slot, i.e., for $M$ mini-slots. 
We remark that this model is a generalization of other models present in the literature, which constrain the URLLC transmission to take place within a single mini-slot, i.e., imposing $M_u = 1$.
When it is compatible with the latency constraints, increasing the time duration of the URLLC data transmission reduces the power needed to transmit the same amount of information. 
Without loss of generality, to correctly compare the schemes, we assume that only a single URLLC packet has to be served and it is transmitted immediately after its arrival.

The subset of the mRBs reserved for $e$ is denoted as $\mc{F}_e\subseteq \mc{F}$, $|\mc{F}_e| = F_e$, while its set of temporal resources comprehend all the $M$ mini-slots, as mention above. 
In the case of OMA, we enforce orthogonal allocation in the frequency domain, multiplexing the users on different spectral resources so that each mRB reserved for $u$ cannot be shared with $e$.
In the case of NOMA, all the resources reserved for $e$ are shared with $u$ employing different values of power to guarantee the traffic requirements of both services. 
Therefore, the set $\mc{F}_e$ results
\begin{equation}
\label{eq:Fe}
    \mc{F}_e = 
    \begin{cases}
      \mc{F} \setminus \mc{F}_u, \quad &\text{OMA}, \\
      \mc{F}, \quad &\text{NOMA}.
    \end{cases}
\end{equation}
An example of the resource allocation grid for both OMA and NOMA is presented in Fig.~\ref{fig:resource_grid}.

Given the number of resources assigned to each user, and being $N_i$ the number of data bits to be transmitted for user $i\in\{e,u\}$, the average spectral efficiencies per resource [bit/s/Hz] are obtained as
\begin{equation}\label{eq:bit2se}
\begin{aligned}
    r_u &= \frac{N_u}{T_m \Delta_f F_u M_u}, \qquad&
    r_e &= \frac{N_e}{T_m \Delta_f F_e M}.
\end{aligned}
\end{equation}

Finally, the transmitter possesses different knowledge of the channel gains of the two different types of traffic. Instantaneous CSI is not available at the transmitter for URLLC traffic, and, thus, we assume that only the mean signal-to-noise ratio (SNR) $\Gamma_u$ is known.
On the other hand, we assume the complete knowledge of the instantaneous CSI for the eMBB user.

It is worth noting that the same model can be used when URLLC-eMBB coexistence is obtained by puncturing the eMBB data in the frequency domain. In this case, the number of mRBs given to URLLC is always $F_u = F$, while the transmission time is set to $M_u = 1$. With puncturing, the power for the eMBB service must be computed increasing the average spectral efficiency $r_e$, so that the eMBB data can still be recovered after erasing some part of the message. Apart from this detail, the analysis presented in the remainder of the paper is still valid also in the case of puncturing.

\subsection{Signal model}
\label{sec:signalmodel}
We consider a multi-carrier system, where a single mRB $(t, f)$, corresponds to mini-slot $t\in\mc{M}$ and frequency resource $f \in \mc{F}$. Depending on the multiple access technology, each resource can be used simultaneously  by both $e$ and $u$ users (NOMA) or by only one of them (OMA). In any case, the corresponding signals, denoted as $\mb{s}_e(t,f)$ for user $e$ and as  $\mb{s}_u(t,f)$ for user $u$, satisfy the following requirements
\[
\E{||\mb{s}_e(t, f)||^2} = 1, \quad \E{||\mb{s}_u(t, f)||^2} = 1, \quad \E{\mb{s}_e^{\rm H}(t, f) \mb{s}_u(t, f)} = 0.
\]
We remark that URLLC codewords are exactly contained in a single transmission, spanning contiguous mini-slots.  
In this way, if $M_u$ is chosen appropriately, a single successful transmission carries all the information to the receiver, fulfilling the latency constraint.
On the other hand, the eMBB codewords may span the whole time slot since they do not have any specific latency constraints.
In the case of NOMA, the decoding process implies that one of the two users employs SIC to remove the data stream of the other (interfering) user. However, the cancellation of a user's data stream requires the reception of the entire codeword. Since eMBB codewords span an indeterminate number of mini-slots, the URLLC user that waits for the reception of the whole $e$ codeword may incur a violation of its latency requirements. Therefore, we adopt a NOMA paradigm where \emph{it is always the eMBB user that employs the SIC to remove the interference}, and the URLLC will always be received in the presence of the eMBB interference.

The base station (BS) transmits both eMBB and URLLC data streams using superposition coding. The transmitted signal in an mRB $(t,f)$ is:
\begin{equation}
\begin{aligned}
\mb{x}(t, f) &= \sqrt{P_e(t,f)} \mb{s}_e(t, f) + \sqrt{P_u(t,f)}  \mb{s}_u(t, f)
\end{aligned}
\end{equation}
where $P_e(t,f)$, $P_u(t,f)$ are the power coefficient used to transmit the symbols of $e$ and $u$ on resource $(t,f)$, respectively. 
It is worth noting that this formalization implies the possibility of the transmission of the data stream of a single user $i \in \{u,e\}$ in an OMA fashion by setting the power coefficient of one of the two users equal to 0.  We further denote as $\mb{P}_e$ and $\mb{P}_u$ the vectors collecting all the eMBB and URLLC power coefficients, respectively.

On the receiver side, we can model the signal received by user $i \in \{e, u\}$ as
\begin{equation}
\begin{aligned}
\mb{y}_i(t, f) = h_i(f) \mb{x}(t,f) + \mb{n}_i
\end{aligned}
\end{equation}
where $h_i(f)$, $i \in \{e,u\}$, is the the fading channel gain taking into account both small-scale and large-scale fading, and $n_i \sim \mc{N}(0, \sigma^2 \mb{I}_{n})$, $i \in \{e,u\}$, is the noise at the receiver. Specifically, $h_e(f)$ realization is assumed known at the transmitter, while $h_u(f)$ is assumed unknown.
The channel fading coefficients do not change during the whole slot.
In the remainder of the paper, we assume that fading is Rayleigh distributed; however, this assumption is not essential, and all the considerations can be extended for any other kind of fading.

To simplify the notation, we further denote the normalized instantaneous SNR at the receiver $i\in\{e, u\}$ as $    \gamma_i(f) = \frac{|h_i(f)|^2}{\sigma^2}$,
where, for Rayleigh fading, it is $\gamma_i(f) \sim \text{Exp}(\Gamma_i)$  and $\Gamma_i = \mathbb{E}\{\gamma_i\}$, $i\in\{e,u\}$ is the normalized mean SNR at the receiver. Under the assumptions made, $\gamma_e(f)$ is known and $\gamma_u(f)$ is not known at the transmitter. 

\subsection{Mutual information}

Due to the high number of informative bits, we can approximate the eMBB traffic to work in the infinite symbol regime, allowing for a Shannon-like information model. Due to the short packets employed for the URLLC communication, the communication is accurately described by the finite blocklength regime when the CSI is known~\cite{Polyanskiy2010}.
\FS{Nevertheless, without the CSI knowledge, the finite blocklength effects disappear, as proven in~\cite{Yang2014, Durisi2016}, and we can address to the outage capacity to model the communication.}


According to the previous assumptions, the mutual information of $u$ data stream at receiver $u$ is~\cite{Tse2005}
\begin{equation} \label{eq:info:u}
    I_u(\mb{P}_u, \mb{P}_e) = \frac{1}{F_u M_u} \sum_{t\in\mc{M}_u}\sum_{f\in\mc{F}_u} \log_2\left( 1 + \frac{\gamma_u(f) P_u(t, f)}{1 + \gamma_u(f) P_e(t, f)}\right); \quad \text{[bit/s/Hz]}
\end{equation}
\FS{and the mutual information of $e$ data stream at receiver $e$ after a successful SIC process is
\begin{equation} \label{eq:info:e}
    I_e(\mb{P}_e) =  \frac{1}{F_e M} \sum_{t\in\mc{M}} \sum_{f\in\mc{F}_e} \log_2\left( 1 + \gamma_e(f) P_e(t,f)\right), \quad \text{[bit/s/Hz]}
\end{equation}
where SIC process is always assumed successful in the OMA case.}
\FS{In the NOMA case, the SIC process can fail if the achievable rate for the $u$ data stream at the $e$ receiver is lower than $r_u$. Due to the knowledge of the $e$ channel, the finite blocklength penalty is not zero in this case, and it depends on the required outage probability on the eMBB transmission. On the other hand, due to the relatively high value of eMBB outage probability required (e.g., $10^-2$), the achievable rate is the $90\%$ of the mutual information for every practical blocklength~\cite{Yang2014}. To account for the penalty, we can set $r_{u,e} = r_u / 0.9$ as the target spectral efficiency to sustain for the SIC process to succeed.} 

The expression of the mutual information of $u$ data stream at receiver $e$ is
\begin{equation} \label{eq:info:ue}
    I_{u,e}(\mb{P}_u, \mb{P}_e) = \frac{1}{F_u M_u}\sum_{t\in\mc{M}_u}\sum_{f\in\mc{F}_u} \log_2\left( 1 + \frac{\gamma_e(f) P_u(t,f)}{1 + \gamma_e(f) P_e(t,f)}\right), \quad \text{[bit/s/Hz]}
\end{equation}
where $I_{u,e} = 0$ in the OMA case.



\subsection{Outage events}
\label{sec:outages}
Given the mutual information denoted above, it is easy to define the outage events that may occur during the transmission towards both users.

Let us start from the outage events occurring at $e$.
The data stream transmitted to $e$ is incorrectly decoded if: a)  SIC  is not successful, or b) the data stream of $e$ is erroneously decoded after the SIC.
 SIC can not be employed if $\FS{r_{u,e}} > I_{u,e}$, and its probability is
\begin{equation} \label{eq:outage:sic}
    p_{u,e}(\mb{P}_u, \mb{P}_e) = \Pr\{ I_{u,e}(\mb{P}_u, \mb{P}_e) < \FS{r_{u,e}} \}.
\end{equation}
Assuming that the SIC process has been successful, the data stream of $e$ is wrongly decoded at its own receiver if $r_e > I_{e}$, which occurs with probability
\begin{equation} \label{eq:outage:e}
    p_{e}(\mb{P}_e) = \Pr\{ I_{e}(\mb{P}_e) < r_e \}.
\end{equation}
It is worth noting that the assumption of complete knowledge of the CSI for the eMBB user constraints $p_e$ and $p_{u,e}$ to be either 1 or 0.

Let us now focus on the outage event for $u$.
The data stream intended for $u$ is not successfully decoded if a) the URLLC packet is erroneously decoded at receiver $u$, happening if $r_u > I_u$, or b) if the URLLC packet is not entirely received before the latency requirement $M_u^{\max}$.
Assuming the URLLC packet is transmitted as soon as it arrives, the outage probability of $u$ can be formalized as~\cite{Durisi2016}
\begin{equation} \label{eq:outage:u}
\begin{aligned}
p_u(\mb{P}_u, \mb{P}_e) = \Pr\{I_u(\mb{P}_u, \mb{P}_e) \le r_u \cup M_u > M_u^{\max} \}.
\end{aligned}
\end{equation}
Using Boole's inequality, the outage probability can be upper bound by
\begin{equation}\label{eq:outage:u:expanded}
    p_u(\mb{P}_u, \mb{P}_e) \le \Pr\{I_u(\mb{P}_u, \mb{P}_e) \le r_u\} + \Pr\{M_u > M_u^{\max}\}%
\end{equation}
which must be lower than the URLLC reliability constraint $\epsilon_u$.
The probability of the latency term in~\eqref{eq:outage:u:expanded} can be easily constrained to be 0 by choosing a reasonable value of $M_u \le M_u^{\max}$ so that it is 
\begin{equation}\label{eq:outage:u:bounded}
    p_u(\mb{P}_u, \mb{P}_e) \le \Pr\{I_u(\mb{P}_u, \mb{P}_e) \le r_u\} \le \epsilon_u.
\end{equation}
According to~\eqref{eq:bit2se}, spreading the same number of informative bits on more than one mini-slot directly reduces the target rate $r_u$. Hence, a way to reduce the transmission power is to exploit the tolerable delay as much as possible. Nevertheless, using a large value of $M_u$ for the transmission of a packet may prevent the prompt transmission of the following URLLC packet, increasing the latency experienced by the latter. Hence, the complete design of the transmission time should also take into account the URLLC traffic model, which is beyond the scope of this investigation. In the remainder, we address the problem of minimum power assuming $M_u$ is given. 

\section{Resource allocation}
\label{sec:allocation}
We can now formalize the overall minimum power allocation problems for both OMA and NOMA schemes. We further denote as $P^\text{tot} = \sum_{t\in\mc{M}}\sum_{f\in\mc{F}} P_u(t, f) + P_e(t,f)$ the overall power spent.

The OMA power allocation problem is the following
\begin{align}
\label{op:oma}
    \min_{\mb{P}_e, \mb{P}_u}& P^\text{tot} \\
    \text{s.t. } & p_{e}(\mb{P}_e) = 0, \tag{\ref{op:oma}.a} \label{op:oma:pe} \\
    & P_u(t,f) P_e(t,f) = 0, \, \forall t\in\mc{M}, \forall f\in\mc{F} \tag{\ref{op:oma}.b} \label{op:oma:pupe}\\
    & p_u(\mb{P}_u,\mb{P}_e=0)\le \epsilon_u, \tag{\ref{op:oma}.c} \label{op:oma:ou}\\
    & \mb{P}_u \succeq 0, \mb{P}_e \succeq 0, \tag{\ref{op:oma}.d}
\end{align}
where constraint~\eqref{op:oma:pe} assures that the eMBB is perfectly decoded, constraint~\eqref{op:oma:pupe} enforces the orthogonality of the allocation, and constraint~\eqref{op:oma:ou}  guarantees the reliable transmission of URLLC traffic as in \eqref{eq:outage:u:bounded}, under the condition that $\mb{P}_e=0$ due to orthogonal allocation.
The NOMA allocation problem is almost identical, except for the orthogonality constraint~\eqref{op:oma:pupe} substituted by the SIC requirement~\eqref{eq:outage:sic}, which requires that SIC is successful at the eMBB receiver. Hence, the minimization problem results
\begin{align}
\label{op:noma}
    \min_{\mb{P}_e, \mb{P}_u}& P^\text{tot} \\
    \text{s.t. } & p_{e}(\mb{P}_e) = 0, \tag{\ref{op:noma}.a}\label{op:noma:pe}\\
    & p_{u,e}(\mb{P}_u, \mb{P}_e) = 0, \label{op:noma:ue} \tag{\ref{op:noma}.b}\\
    & p_u(\mb{P}_u, \mb{P}_e) \le \epsilon_u, \tag{\ref{op:noma}.c} \label{op:noma:ou}\\
    & \mb{P}_u \succeq 0, \mb{P}_e \succeq 0, \tag{\ref{op:noma}.d}
\end{align}
\mm{Both problems are not convex w.r.t. $\mb{P}_u$ and $\mb{P}_e$. In both cases, the main challenge is represented by the constraints~\eqref{op:oma:ou} and~\eqref{op:noma:ou} for which no known solution exists in closed form, and only bounds or approximations can be used, e.g., see~\cite{Bai2013, Coon2015, Li2020}. Unfortunately, these approximations are not convex, and special functions are involved, so the optimization of the power coefficients is a hard task, even harder in the case of NOMA due to the presence of the interference. Therefore, for the NOMA scheme, where the two types of traffic interfere, we propose a heuristic based on a \emph{layered approach} designed to decouple the allocations of the two-class of users: we first find the optimal power distribution for the eMBB and then for the URLLC users. The reason for first optimizing the eMBB allocation is the perfect knowledge of the $e$ channel, and its transmissions are virtually interference-free, also in the NOMA scenario, since the interference can be canceled with a very high probability of success. We will show that constraints~\eqref{op:oma:ou} and~\eqref{op:noma:ou} can be addressed by numerical methods based on the use of look-up tables once the level of the eMBB interference is known.}

\FS{We remark that practical implementation of the system should account for the maximum transmission power of the BS, which may yield an empty feasible set, i.e., no feasible solution can be found.
Being interested in comparing OMA and NOMA in terms of minimum power consumed, we do not consider such power limit in problems~\eqref{op:oma} and~\eqref{op:noma}, but we present experimentally the effect of imposing a power budget in Section~\ref{sec:results}.}


\subsection{Time-frequency selection}
\label{sec:allocation:e}
In the time domain, the eMBB transmission is allocated to an entire slot, as customary. Even if the URLLC could transmit on the whole time slot, its actual transmission time is constrained by the tolerable latency $M_u^\text{max} \ge M_u$, obtaining~\eqref{eq:outage:u:bounded} from~\eqref{eq:outage:u:expanded}.
In the frequency domain, the allocation depends on the access method selected. In the case of NOMA, all the resources are shared. Hence, the allocation sets are $\mc{F}_u = \mc{F}_e = \mc{F}$.
In the case of OMA, we assume that the number of frequencies spanned by the URLLC transmission $F_u$ has been selected in some way\footnote{In Section~\ref{sec:results} different choices of the number of frequencies are compared.}, resulting in knowing exactly the number of resources $F_u$ and $F_e$ reserved for both users.
Therefore, we evaluate the frequency sets as follows. 
We remark that the mutual information $I_e$ defined in~\eqref{eq:info:e} depends only on the power coefficients $\mb{P}_e$ for both OMA and NOMA cases; therefore, it is maximized if the set $\mc{F}_e$ contains the frequencies with the highest instantaneous SNR $\gamma_e(f)$ among the possible ones, or, equivalently, if the set $F_u$ contains the mRB with lowest channel gains.
Hence, the set of frequency resources given to $u$ is
\begin{equation}
    \label{eq:Fu}
    \mc{F}_u = \argmin_{\mc{F}_u' \subseteq \mc{F}, |\mc{F}_u'| = F_u} \sum_{f\in\mc{F}_u'} \gamma_e(f),
\end{equation}
and the set $\mc{F}_e$ can be computed following~\eqref{eq:Fe}. We remark that the mRB selection given in~\eqref{eq:Fu} does not influence the URLLC transmission, not knowing the CSI for those channels.

\subsection{eMBB allocation}
Since we have assumed that the channel is constant over the entire slot, the power allocation for eMBB, which is not affected by any interference, will be the same in every mini-slot. More formally, $P_e(t,f) = P_e(f)$, $\forall t\in\mc{M}$, $\forall f\in\mc{F}$, and~\eqref{eq:info:e} can be simplified in:
\begin{equation}
    I_e(\mb{P}_e) =  \frac{1}{F_e}\sum_{f\in\mc{F}_e} \log_2\left( 1 + \gamma_e(f) P_e(f)\right).
\end{equation}

Following the above assumptions, we can obtain the minimum value of eMBB power by solving 
\begin{equation}
\label{eq:op:e}
\begin{aligned}
    \min_{\mb{P}_e \succeq 0} &\left\{\sum_{f\in\mc{F}_e} P_e(f) \, \big| \, 
    \sum_{f\in\mc{F}_e} \log_2\left( 1 + \gamma_e(f) P_e(f)\right) \ge F_e r_e \right\}. 
\end{aligned}
\end{equation}
The solution of problem~\eqref{eq:op:e} can be computed through the well-known water-filling approach~\cite{Tse2005}. \FS{Coming from the general water-filling formulation~\cite{Palomar2005, He2013}, it is easy to derive the solution providing the minimum power spent. 
The optimal power $\mb{P}_e$ results}
\begin{equation}
 \label{eq:power:e}
 P_e(f) = 
 \begin{cases}
 \displaystyle
 2^{r_e \frac{F_e}{F_e^+}} \prod_{i\in\mc{F}_e^+} \left(\frac{1}{\gamma_e(i)}\right)^{\frac{1}{F_e^+}} - \frac{1}{\gamma_e(f)}, \quad  f\in\mc{F}_e^+\\    
 0, \quad \text{otherwise}
 \end{cases}
\end{equation}
where the set $\mc{F}_e^+ \subseteq \mc{F}_e$, $|\mc{F}_e^+| = F_e^+$, is composed by all and only the mRBs guaranteeing $P_e(f) > 0$, $\forall f\in\mc{F}_e^+$~\cite{He2013} 

Similarly, 
we can obtain the minimum power spent to transmit the URLLC packet satisfying the SIC requirement~\eqref{op:noma:ue}. The optimization problem to solve is
\begin{equation}
\label{eq:op:sic}
\begin{aligned}
    \min_{\mb{P}_u \succeq 0} &\left\{\sum_{f\in\mc{F}_u} P_u(f) \, \big| \, 
    \sum_{f\in\mc{F}_u} \log_2\left(1 + \frac{\gamma_e(f) P_u(f)}{1 + \gamma_e(f) P_e(f)}\right) \ge F_u \FS{r_{u,e}}\right\}.
\end{aligned}
\end{equation}
Also in this case, a water-filling approach may be employed considering that the normalized channels gains involved in the optimization are $\frac{\gamma_e(f)}{1 + \gamma_e(f) P_e(f)}$. The solution obtained is labeled as $\mb{P}_u^\text{SIC}$ to highlight the fact that this is the minimum power needed to satisfy the SIC constraint, whose components are
\begin{equation} \label{eq:power:sic}
    P_u^\text{SIC}(f) =
    \begin{cases}
    \displaystyle
       2^{\FS{r_{u,e}} \frac{F_u}{F_{u}^+}} \prod_{i\in\mc{F}_{u}^+} \left(\frac{1}{\gamma_e(i)} + P_e(i)\right)^{\frac{1}{F_{u}^+}} - \frac{1}{\gamma_e(f)} - P_e(f), \quad f\in\mc{F}_{u}^+, \\
       0, \quad \text{otherwise}
    \end{cases}
\end{equation}
where $\mc{F}_{u}^+ \subseteq \mc{F}_u$, $|\mc{F}_{u}^+| = F_{u}^+$, collects all and only the mRBs reserved for $u$ guaranteeing $P_u^\text{SIC}(f) > 0$. When we implement the OMA paradigm, we impose $\mc{F}_{u}^+ = \emptyset$, resulting in $\mb{P}_u^\text{SIC} = \mb{0}$. 

Guaranteeing the success of SIC at the eMBB receiver, the power coefficients obtained in~\eqref{eq:power:sic} represent the minimum value of power f $u$ transmissions. 
Moreover, we remark that better eMBB channel gain conditions lead to a lower power consumption for SIC due to higher $\gamma_e(f)$ and lower $P_e(f)$ given by~\eqref{eq:power:e}. On the other hand, when the eMBB user experiences poor channel conditions, the SIC process may be the dominant effect in the allocation of URLLC power, as we show in Section~\ref{sec:results}.

\subsection{URLLC allocation}
\label{sec:allocation:u}
We now address the issue of optimizing the power allocation of the URLLC user under the constraint 
\begin{equation}
    \label{eq:outage:ufull}
    p_u(\mb{P}_u;\mb{P}_e) = \Pr \left\{ I_u(\mb{P}_u;\mb{P}_e) \le r_u \right\} \le \epsilon_u,
\end{equation}
where $\mb{P}_e$ is now a parameter and not an optimization variable since we assume that the power coefficients for the eMBB user have already been evaluated by solution~\eqref{eq:power:e}.
As in the eMBB case, we can obtain all the power coefficients by studying a single mini-slot. The eMBB interference power coefficients depend on the frequency only, being the same for different mini-slots. Having the same interference and channel gain - even if unknown -
the power coefficient for URLLC transmission will depend only on the gain of the specific frequency channel, i.e. $P_u(t,f) = P_u(f)$,  $t\in\mc{M}_u$, $f\in\mc{F}_u$.
Thus, the mutual information $I_u$ can be simplified in:
\begin{equation} \label{eq:info:u:simple}
    I_u(\mb{P}_u;\mb{P}_e) = \frac{1}{F_u}\sum_{f\in\mc{F}_u} \log_2\left( 1 + \frac{\gamma_u(f)P_u(f)}{1 + \gamma_u(f) P_e(f)} \right).
\end{equation}

Taking into account both the constraints on the SIC process and the reliability requirement, we can now express the URLLC allocation problem as
\begin{equation}
    \label{eq:op:u}
    \begin{aligned}
    \min_{\mb{P}_u \succeq \mb{P}_u^\text{SIC}} &\sum_{f\in\mc{F}_u} P_u(f), \\
    \text{s.t. } & \Pr \left\{\sum_{f\in\mc{F}_u} \log_2\left( 1 + \frac{\gamma_u(f)P_u(f)}{1 + \gamma_u(f) P_e(f)} \right) \le F_u r_u \right\} \le \epsilon_u.
\end{aligned}
\end{equation}
In theory, the minimization of the objective function of~\eqref{eq:op:u} may be obtained by applying a gradient descent approach if the chosen power coefficients always lie in the feasible set.
Nevertheless, due to the unknown formulation of the outage probability, projection~\cite{MAPEL} or conditional gradient descent algorithms~\cite{Jaggi2013} cannot be applied directly.

In the following, we show an analytic solution of problem~\eqref{eq:op:u} for two particular cases: single frequency and interference-limited scenario. 

\subsubsection{Single frequency resource}
\label{sec:allocation:u:single}
When a single frequency resource is allocated, i.e. $F_u = 1$, the solution of~\eqref{eq:outage:ufull} is known~\cite{Wang2018}, and the minimum transmission power can be obtained imposing $p_u = \epsilon_u$. For a Rayleigh fading, we obtain the minimum power value on the only active frequency $f\in\mc{F}_u$ as
\begin{equation}\label{eq:power:u_single}
    P_u^\text{single}(f) = (2^{r_u} - 1) \left( P_e(f) - \frac{1}{\Gamma_u \ln(1 - \epsilon_u)} \right).
\end{equation}
In Fig.~\ref{fig:urllc_outage:single}, we show $P_u^\text{single}(f)$ as a function of the mean normalized SNR $\Gamma_u$. The reliability is set as $\epsilon_u = 10^{-5}$ and $P_e(f) = 0$ dBm. For example, if $r_u = 1/3$, to have $P_u^\text{single}(f) \le 10$ dBm we need $\Gamma_u \ge 61$ dB. Hence, transmitting on a single frequency can work only when the user experiences very good channel conditions. 
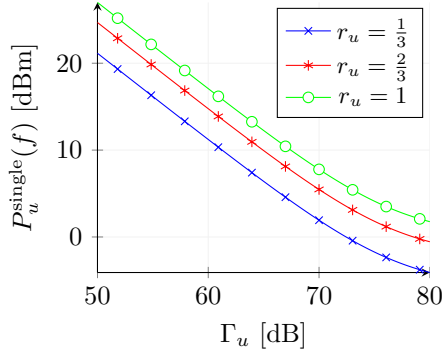
\begin{figure}[bt]
    \centering
    \begin{tikzpicture}
    \begin{axis}[width=6cm, 
                 grid=both,
                 axis lines = left,
                 grid style={line width=.1pt, draw=gray!10},
                 xlabel = $\Gamma_u$ {[dB]}, 
                 ylabel = {$P_u^\text{single}(f)$ [dBm]},
                 legend style ={font=\footnotesize},]
        \addplot [domain=50:80, samples=100, color=blue, mark=x, mark repeat=10, mark phase=7]{10*log10((2^(1/3) - 1) * ( 10^-3 - 1 / 10^(x/10)/ ln(1 - 10^(-5)))) + 30};
        \addlegendentry{$r_u = \frac{1}{3}$};
        \addplot [domain=50:80, samples=100, color=red, mark=asterisk, mark repeat=10, mark phase=7]{10*log10( (2^(2/3) - 1) * ( 10^-3 - 1 / 10^(x/10)/ ln(1 - 10^(-5))) ) + 30};
        \addlegendentry{$r_u = \frac{2}{3}$};
        \addplot [domain=50:80, samples=100, color=green, mark=*, mark repeat=10, mark phase=7, mark options={fill=white}]{10*log10( (2^1 - 1) * ( 10^-3 - 1 / 10^(x/10)/ ln(1 - 10^(-5)))) + 30};
        \addlegendentry{$r_u = 1$};
    \end{axis}
    \end{tikzpicture}
    \caption{$P_u^\text{single}(f)$ as a function of $\Gamma_u$ according to~\eqref{eq:power:u_single}, when $\epsilon_u = 10^{-5}$ and $P_e(f) = 0$ dBm.}
    \label{fig:urllc_outage:single}
\end{figure}
When $F_u > 1$, the exact evaluation of the probability presented in~\eqref{eq:outage:ufull} is not known in a closed-form. Experimental results,   shown in Fig.~\ref{fig:outage:experimental}, obtained by Monte Carlo simulations,  prove that increasing the available frequency resources will lead to a great improvement in terms of reliability. In other words, in a general case when the channel conditions are not good enough, we must rely on the diversity gain. 

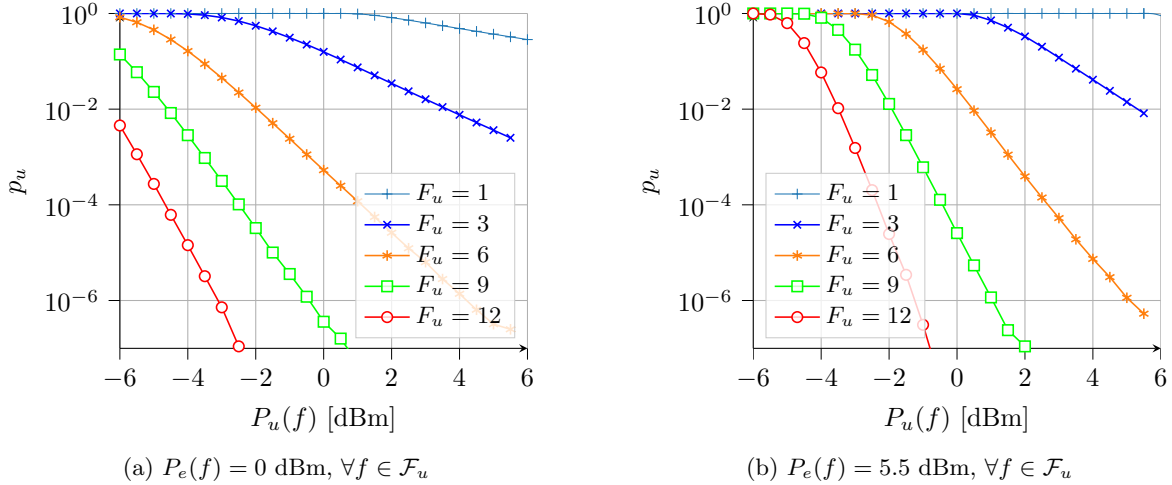
\begin{figure}
    \begin{subfigure}[c]{.49\columnwidth}
        \centering
\begin{tikzpicture}

\definecolor{color0}{rgb}{0.12156862745098,0.466666666666667,0.705882352941177}
\definecolor{color1}{rgb}{1,0.498039215686275,0.0549019607843137}
\definecolor{color2}{rgb}{0.172549019607843,0.627450980392157,0.172549019607843}
\definecolor{color3}{rgb}{0.83921568627451,0.152941176470588,0.156862745098039}

\begin{axis}[
legend cell align={left},
legend style={fill opacity=0.8, draw opacity=1, text opacity=1, draw=white!80!black, at={(0.97,0.03)},
  anchor=south east},
log basis y={10},
tick align=outside,
tick pos=left,
x grid style={white!69.0196078431373!black},
xlabel={\(\displaystyle P_u(f)\) [dBm]},
xmajorgrids,
xmin=-6, xmax=6,
xtick style={color=black},
y grid style={white!69.0196078431373!black},
ylabel={\(\displaystyle p_u\)},
ymajorgrids,
ymin=1e-07, ymax=1,
ymode=log,
ytick style={color=black}
]
\addplot [domain=-6:0, samples=73, color=color0, mark=+, mark repeat=6, mark phase=0, mark options={fill=white}, forget plot]{1};
\addplot [domain=0:6, samples=73, color=color0, mark=+, mark repeat=6, mark phase=0, mark options={fill=white}]{1 - e^(- (1/1000) / (10^(x/10 - 3) - 10^(-3)) ) };
\addlegendentry{$F_u = 1$};
\addplot [semithick, blue, mark=x]
table {%
-6 1
-5.5 1
-5 0.99998911
-4.5 0.99794821
-4 0.97873861
-3.5 0.9218498
-3 0.82307522
-2.5 0.69369117
-2 0.55358807
-1.5 0.42269238
-1 0.31182948
-0.5 0.22394584
0 0.15743425
0.5 0.10900816
1 0.07458592
1.5 0.05084165
2 0.0345731
2.5 0.02359259
3 0.01612804
3.5 0.01105393
4 0.00760317
4.5 0.0052534
5 0.0036354
5.5 0.0025168
};
\addlegendentry{$F_u=3$}
\addplot [semithick, color1, mark=asterisk, mark options={fill=white}]
table {%
-6 0.83463771
-5.5 0.65405312
-5 0.4550131
-4.5 0.28505001
-4 0.16380593
-3.5 0.0880479
-3 0.04501722
-2.5 0.0222512
-2 0.01071281
-1.5 0.005096
-1 0.00240396
-0.5 0.00112858
0 0.00053074
0.5 0.00025091
1 0.00011903
1.5 5.568e-05
2 2.608e-05
2.5 1.236e-05
3 6.33e-06
3.5 2.81e-06
4 1.38e-06
4.5 6.4e-07
5 3.2e-07
5.5 2.5e-07
};
\addlegendentry{$F_u=6$}
\addplot [semithick, green, mark=square*, mark options={solid, fill=white}]
table {%
-6 0.13883803
-5.5 0.05916983
-5 0.02293734
-4.5 0.00830285
-4 0.00285441
-3.5 0.00095355
-3 0.00031574
-2.5 0.00010218
-2 3.266e-05
-1.5 1.006e-05
-1 3.58e-06
-0.5 1.2e-06
0 3.6e-07
0.5 1.6e-07
1 5e-08
1.5 0
2 0
2.5 0
3 0
3.5 0
4 0
4.5 0
5 0
5.5 0
};
\addlegendentry{$F_u=9$}
\addplot [semithick, red, mark=*, mark options={solid, fill=white}]
table {%
-6 0.00455385
-5.5 0.00114086
-5 0.00027334
-4.5 6.175e-05
-4 1.434e-05
-3.5 3.22e-06
-3 7.2e-07
-2.5 1.1e-07
-2 4e-08
-1.5 0
-1 0
-0.5 0
0 0
0.5 0
1 0
1.5 0
2 0
2.5 0
3 0
3.5 0
4 0
4.5 0
5 0
5.5 0
};
\addlegendentry{$F_u=12$}
\end{axis}

\end{tikzpicture}
        \caption{$P_e(f) = 0$ dBm, $\forall f \in \mc{F}_u$}
    \end{subfigure}
    ~
    \begin{subfigure}[c]{.49\columnwidth}
        \centering
\begin{tikzpicture}

\definecolor{color0}{rgb}{0.12156862745098,0.466666666666667,0.705882352941177}
\definecolor{color1}{rgb}{1,0.498039215686275,0.0549019607843137}
\definecolor{color2}{rgb}{0.172549019607843,0.627450980392157,0.172549019607843}
\definecolor{color3}{rgb}{0.83921568627451,0.152941176470588,0.156862745098039}

\begin{axis}[
legend cell align={left},
legend style={
  fill opacity=0.8,
  draw opacity=1,
  text opacity=1,
  at={(0.03,0.03)},
  anchor=south west,
  draw=white!80!black
},
log basis y={10},
tick align=outside,
tick pos=left,
x grid style={white!69.0196078431373!black},
xlabel={\(\displaystyle P_u(f)\) [dBm]},
xmajorgrids,
xmin=-6, xmax=6,
xtick style={color=black},
y grid style={white!69.0196078431373!black},
ylabel={\(\displaystyle p_u\)},
ymajorgrids,
ymin=1e-07, ymax=1,
ymode=log,
ytick style={color=black}
]
\addplot [domain=-6:5.5, samples=139, color=color0, mark=+, mark repeat=6, mark phase=0, mark options={fill=white}, forget plot]{1};
\addplot [domain=5.5:12, samples=72, color=color0, mark=+, mark repeat=6, mark phase=0, mark options={fill=white}]{1 - e^(- (1/1000) / (10^(x/10 - 3) - 10^(5.5/10 - 3)) ) };
\addlegendentry{$F_u = 1$};
\addplot [semithick, blue, mark=x]
table {%
-6 1
-5.5 1
-5 1
-4.5 1
-4 1
-3.5 1
-3 1
-2.5 1
-2 1
-1.5 1
-1 1
-0.5 1
0 0.99966538
0.5 0.92225788
1 0.7159558
1.5 0.50472021
2 0.33116013
2.5 0.20182846
3 0.11928696
3.5 0.07015622
4 0.04122431
4.5 0.02415075
5 0.01409681
5.5 0.00821299
};
\addlegendentry{$F_u=3$}
\addplot [semithick, color1, mark=asterisk]
table {%
-6 1
-5.5 1
-5 1
-4.5 1
-4 1
-3.5 1
-3 0.99881761
-2.5 0.92096535
-2 0.66831576
-1.5 0.37547836
-1 0.17349416
-0.5 0.07023342
0 0.02616017
0.5 0.00931205
1 0.00325024
1.5 0.00111937
2 0.00039402
2.5 0.00014208
3 5.284e-05
3.5 1.902e-05
4 7.39e-06
4.5 3.08e-06
5 1.14e-06
5.5 5.3e-07
};
\addlegendentry{$F_u=6$}
\addplot [semithick, green, mark=square*, mark options={fill=white}]
table {%
-6 1
-5.5 1
-5 0.99999938
-4.5 0.9879599
-4 0.81344055
-3.5 0.45399773
-3 0.17519119
-2.5 0.05175687
-2 0.01284566
-1.5 0.00286105
-1 0.00060777
-0.5 0.00012758
0 2.564e-05
0.5 5.42e-06
1 1.16e-06
1.5 2.4e-07
2 1.1e-07
2.5 1e-08
3 1e-08
3.5 0
4 0
4.5 0
5 0
5.5 0
};
\addlegendentry{$F_u=9$}
\addplot [semithick, red, mark=*, mark options={fill=white}]
table {%
-6 0.99982164
-5.5 0.94865083
-5 0.62778786
-4.5 0.24026776
-4 0.05845437
-3.5 0.01046614
-3 0.00154208
-2.5 0.00020264
-2 2.455e-05
-1.5 3.44e-06
-1 3.1e-07
-0.5 2e-08
0 0
0.5 0
1 0
1.5 0
2 0
2.5 0
3 0
3.5 0
4 0
4.5 0
5 0
5.5 0
};
\addlegendentry{$F_u=12$}
\end{axis}

\end{tikzpicture}
        \caption{$P_e(f) = 5.5$ dBm, $\forall f \in \mc{F}_u$}
    \end{subfigure}
    \caption{Experimental results of $p_u$~\eqref{eq:outage:ufull} versus $P_u(f)$ with different number of $F_u$, $F_u r_u = 1$, $\Gamma_u = 30$ dB and $P_u(f) = P_u$, $\forall f \in \mc{F}_u$.}
    \label{fig:outage:experimental}
\end{figure}

\subsubsection{Interference-limited scenario}
\label{sec:allocation:u:IL}
A bound of the power consumption for URLLC in the NOMA case can be found exploiting the asymptotic behavior with respect to $\gamma_u(f)$. 
When the URLLC channel conditions are very good, we can approximate each term of~\eqref{eq:info:u:simple} depending on the value of the interference.
In detail, when the $u$ channel gain is large, we can approximate the mutual information  as
\begin{equation} \label{eq:info:approx1}
    \log_2\left(1 + \frac{\gamma_u(f) P_u(f)}{1+\gamma_u(f) P_{e}(f)} \right) \xlongrightarrow{\Gamma_u \rightarrow \infty} \begin{cases}
    \log_2\left(1 + \frac{P_u(f)}{P_e(f)} \right)&\mbox{if }P_e(f) > 0\\
    \log_2\left(\gamma_u(f) P_u(f) \right)&\mbox{if }P_e(f) = 0.
    \end{cases}
\end{equation}
Using the set definition given in Section~\ref{sec:allocation:e}, the mRBs having $P_e(f) > 0$ are collected into the set $\mc{F}_u \cap \mc{F}_e^+$, i.e., the URLLC reserved frequencies shared with $\mc{F}_e^+$; the mRBs having $P_e(f) = 0$ are collected in the set $\mc{F}_u \setminus \mc{F}_e^+$, i.e., the URLLC reserved frequency resources which are not in $\mc{F}_e^+$.
Therefore, \FS{for high values of $\Gamma_u$,} equation~\eqref{eq:info:u:simple} can be well approximated by the following
\begin{equation} \label{eq:info:u:IL:complete}
    F_u I_u(\mb{P}_u;\mb{P}_e) \approx 
    \sum_{f\in\mc{F}_u\cap\mc{F}_e^+}
    \log_2 \left( 1 + \frac{P_u(f)}{P_e(f)} \right) + 
    \sum_{f\in\mc{F}_u \setminus\mc{F}_e^+} \log_2 \left( \gamma_u(f) P_u(f) \right).
\end{equation}
\FS{Experimental results\footnote{\FS{The results are easily reproducible from expression~\eqref{eq:info:u:IL:complete}, but not provided for lack of space.}} show that the approximation is tight for $\Gamma_u > 55$ dB.}

The summation of terms involving no interference leads to a formulation of the outage probability, which can be tightly bounded using the approach given in~\cite{Bai2013}. However, this bound is still very complicated depending on special functions, while we are interested in a simple formulation suggesting the behavior of the power allocation for the interference-limited case. 
To overcome this, we note that the term where $P_e(f) > 0$ has a high probability of being greater than the biggest term where $P_e(f) = 0$. In other words, if we consider the mRB with the lowest interference greater than 0, i.e., $f^{\min} = \argmin_{f\in\mc{F}_u\cap\mc{F}_e^+} P_e(f)$,
the probability that each realization of the SNR in $\mc{F}_u \setminus \mc{F}_e^+$ is greater than the lowest interference is
\begin{equation}
    \prod_{f\in\mc{F}_u \setminus \mc{F}_e^+}\Pr\left\{ \gamma_u(f) \ge \frac{1}{P_e(f^{\min})} \right\} = e^{-\frac{|\mc{F}_u \setminus \mc{F}_e^+|}{\Gamma_u P_e(f^{\min})}} \xlongrightarrow{\Gamma_u \rightarrow \infty} 1,
\end{equation}
for i.i.d. Rayleigh fading.
Therefore, we can substitute $1/P_e(f^{\min})$ to all the channels experiencing no interference $\mc{F}_u \setminus \mc{F}_e^+$. In this way, we obtain an approximate formulation which is a lower bound of~\eqref{eq:info:u:IL:complete} almost surely.  

To summarize, when the system is interference-limited, equation~\eqref{eq:info:u:simple} is bounded with high probability by
\begin{equation} \label{eq:info:u:IL}
    I_u(\mb{P}_u;\mb{P}_e) \ge \tilde{I_u}(\mb{P}_u;\mb{P}_e) = \frac{1}{F_u} \sum_{f\in\mc{F}_u} \log_2 \left( 1 + \frac{P_u(f)}{P_e(f)} \right),
\end{equation}
where it is implied that $P_e(f) = P_e(f^{\min})$, $\forall f\in\mc{F}_u \setminus \mc{F}_e^+$. Hence, finding the power coefficients $\mb{P}_u$ guaranteeing $\tilde{I_u}(\mb{P}_u;\mb{P}_e) \ge r_u$, we obtain that $\Pr\{I_u(\mb{P}_u);\mb{P}_e \ge r_u\} \rightarrow 1$.

Using~\eqref{eq:info:u:IL}, the minimum power coefficients can be found solving the following problem
\begin{equation}
\label{eq:op:interferencelimited}
    \begin{aligned}
    \min_{\mb{P}_u \succeq 0} &\left\{\sum_{f\in\mc{F}_u} P_u(f) \, \big| \,
    \sum_{f\in\mc{F}_u} \log_2\left(1 + \frac{P_u(f)}{P_e(f)}\right) \ge F_u r_u \right\}.
\end{aligned}
\end{equation}
Using the water-filling approach, setting the channel gain of each mRB as $1/P_e(f)$, we obtain
\begin{equation}
\label{eq:power:u:IL}
    P_u^\text{IL}(f) = 
    \begin{cases}
    \displaystyle
    2^{r_u F_u} \prod_{i\in\mc{F}_{u}} P_e(i)^{\frac{1}{F_{u}}} - P_e(f), \quad f\in\mc{F}_{u}, \\
    0, \quad \text{otherwise}.
    \end{cases}
\end{equation}
Solution~\eqref{eq:power:u:IL} leads to a vector $\mb{P}_u^\text{IL}$ which guarantees that $\tilde{I_u}(\mb{P}_u^\text{IL};\mb{P}_e) \ge r_u$ for the interference-limited scenario. 
In the numerical results, we show that the interference-limited scenario may occur when the eMBB user is the farther user. In this kind of scenarios, $\mb{P}_u^\text{IL}$~\eqref{eq:power:u:IL} can be used efficiently for the URLLC allocation.

\section{Simplified solutions for power allocation}
\label{sec:algorithms}
Even if a closed-form for computing the power~\eqref{eq:outage:ufull} does not exist, we may state a non-increasing property of the outage probability, which is useful for the proposed simplified solutions.
\begin{proposition} \label{theo:non-increasing}
Assuming fixed values of $r_u$, and $\mb{P}_e$, the outage probability~\eqref{eq:outage:ufull} is a non-increasing monotone function of $\mb{P}_u$. 
\end{proposition}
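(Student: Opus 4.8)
The statement should be read with respect to the element-wise partial order on the power vector: the claim is that if $\mb{P}_u' \succeq \mb{P}_u \succeq \mb{0}$, then $p_u(\mb{P}_u') \le p_u(\mb{P}_u)$, with $r_u$ and $\mb{P}_e$ held fixed. The plan is to prove this \emph{pathwise}, i.e.\ for every realization of the random gains $\{\gamma_u(f)\}_{f\in\mc{F}_u}$, so that no closed form of the (unknown) outage probability is required.

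First I would fix an arbitrary realization $\gamma_u(f) = g_f \ge 0$, $f\in\mc{F}_u$. For each such $f$, the map
\[
x \longmapsto \log_2\!\left(1 + \frac{g_f\, x}{1 + g_f P_e(f)}\right)
\]
is non-decreasing on $[0,\infty)$: since $P_e(f)\ge 0$ and $g_f\ge 0$, the inner map $x \mapsto g_f x/(1+g_f P_e(f))$ is a non-negative, non-decreasing affine function of $x$, and $\log_2(1+\,\cdot\,)$ is non-decreasing. Summing over $f\in\mc{F}_u$ and dividing by $F_u$, the realized mutual information $I_u(\mb{P}_u)$ in~\eqref{eq:info:u:simple} is a non-decreasing function of $\mb{P}_u$ with respect to $\succeq$; hence $\mb{P}_u' \succeq \mb{P}_u$ implies $I_u(\mb{P}_u') \ge I_u(\mb{P}_u)$ on this realization.

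Consequently, on every realization the sub-level events satisfy the inclusion $\{\, I_u(\mb{P}_u') \le r_u \,\} \subseteq \{\, I_u(\mb{P}_u) \le r_u \,\}$; this inclusion holds for all outcomes of $\{\gamma_u(f)\}$, so taking probabilities and using the monotonicity of $\Pr$ yields $p_u(\mb{P}_u') = \Pr\{I_u(\mb{P}_u')\le r_u\} \le \Pr\{I_u(\mb{P}_u)\le r_u\} = p_u(\mb{P}_u)$, which is the claim.

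As for difficulties: there is essentially no analytic obstacle. The only points that need care are (i) stating precisely in which sense ``monotone'' is meant (component-wise in $\mb{P}_u$, with $\mb{P}_e$ and $r_u$ frozen, so that we are in the reliability-only reformulation~\eqref{eq:outage:ufull} and the SIC term $I_{u,e}$ plays no role), and (ii) emphasizing that the argument is distribution-free and holds sample-path by sample-path, which is exactly what makes it useful despite the outage probability admitting no known closed form. One may additionally note that the monotonicity is strict on the event $\{\gamma_u(f) > 0\}$, which has probability one under Rayleigh fading, but this refinement is not needed for the non-increasing statement.
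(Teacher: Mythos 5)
Your proof is correct and follows essentially the same route as the paper: both arguments rest on the component-wise monotonicity of the realized mutual information $I_u$ in $\mb{P}_u$ (for fixed $\mb{P}_e$ and channel realization) and then pass to the outage probability. Your version is slightly more explicit about the sample-path event inclusion and the element-wise partial order, but there is no substantive difference from the paper's argument.
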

\begin{proof}
See Appendix~\ref{proof:non-increasing}.
\end{proof}

\subsection{A lookup table estimation of URLLC outage probability}
\label{sec:feasible}
To estimate the outage probability, we will make use of a lookup table based on Monte Carlo simulations.
From \eqref{eq:outage:ufull}, the outage probability depends on the value of several parameters:  $r_u$, $\Gamma_u$, $\mc{F}_u$, $\mb{P}_e$ and $\mb{P}_u$. In theory, we can tabulate the outage probability with a specific entry for each of these parameters. However, populating the table for all the possible vectors $\mb{P}_e$ and $\mb{P}_u$ will require an extremely large number of trials and a huge dimension of the table itself. Furthermore, the complexity increases with the number of channels $\mc{F}_u$ employed, obtaining even bigger tables for greater frequency sets.
To overcome all these problems, we tabulate the outage probability assuming that the same values of $P_u$ and the same values of $P_e$ are used on all the considered mRBs. 
Let us denote as 
\begin{equation} \label{eq:montecarlopu}
    \hat{p}_u(P_u, P_e, \Gamma_u, \mc{F}_u, r_u ) = \lim_{n\rightarrow \infty} \frac{1}{n} \sum_{i=1}^n \mathbb{1}\left\{\sum_{f\in\mc{F}_u} \log\left( 1 + \frac{P_u \gamma_u(f)}{1 + P_e \gamma_u(f)}\right) \le F_u r_u \right\}
\end{equation}
the Monte Carlo estimation of the outage probability $p_u$ when $P_u(f) = P_u$ and $P_e(f) = P_e$ $ \forall f\in\mc{F}_u$, where $\mathbb{1}\{\cdot\}$ is the indicator function. \FS{The number of elements of the table is denoted as $L$ and depends on the quantity of parameters under test for power, frequency, SNR, and spectral efficiency values.}
From now on, we will recall eq.~\eqref{eq:montecarlopu} to address the whole lookup table for simplicity.


\subsection{A lookup table-based feasible algorithm for power allocation}
Both allocation problems~\eqref{op:oma} and \eqref{op:noma}
can be solved following a sequential algorithm, as summarized in Algorithm \ref{alg:N-fea}.
Given the channels $\mc{F}_e$, the optimal power allocation for user $e$, which transmits without any interference, is computed by means of \eqref{eq:power:e}.
Subsequently, given the known values $\Gamma_u$, $r_u$ and $\mc{F}_u$, and to compute the feasible power $P_u$, we consider the mRB experiencing the worst interference, denoted as
\begin{equation}
    f^\text{max} = \argmax_f P_e(f).
\end{equation}
Then, from the lookup table $\hat{p}_u(\cdot, {P}_e, \Gamma_u, \mc{F}_u, r_u)$, we obtain the minimum $P_u$ that meets the outage probability constraint when the interference is given by $P_e(f^\text{max})$, i.e.,
\begin{equation} \label{eq:fea:minP}
    P_u = \min \left\{ P \,|\, \hat{p}_u(P, P_e(f^\text{max}), \Gamma_u, \mc{F}_u, r_u ) \le \epsilon_u \right\}.
\end{equation}
In other words, we found the feasible power needed for the transmission assuming that all channels experience the strongest interference. Finally, we set:
\begin{equation}
    \label{eq:alg:fea}
    P_u^*(f) = \max\left\{P_u, P_u^\text{SIC}(f)\right\}, \quad \forall f\in\mc{F}_u.
\end{equation}
The procedure is summarized in Algorithm~\ref{alg:N-fea}. In the following, we show that this algorithm provides a feasible solution to the allocation process.

\begin{proposition} \label{theo:feasible}
Algorithm~\ref{alg:N-fea} guarantees a feasible solution of OMA allocation problem~\eqref{op:oma} and NOMA allocation problem~\eqref{op:noma}.
\end{proposition}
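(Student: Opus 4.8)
The plan is to check, one by one, that the pair $(\mb{P}_e,\mb{P}_u^*)$ returned by Algorithm~\ref{alg:N-fea} — with $\mb{P}_e$ from~\eqref{eq:power:e} and $\mb{P}_u^*$ from~\eqref{eq:alg:fea} — satisfies every constraint of~\eqref{op:oma} in the OMA case and of~\eqref{op:noma} in the NOMA case, treating the look-up quantity~\eqref{eq:montecarlopu} as exact (it is by definition its $n\to\infty$ limit). The eMBB and nonnegativity constraints are immediate: $\mb{P}_e$ is the water-filling solution of~\eqref{eq:op:e}, whose constraint is exactly $I_e(\mb{P}_e)\ge r_e$, and since $\gamma_e(f)$ is known $I_e$ is deterministic, so $p_e(\mb{P}_e)=\Pr\{I_e(\mb{P}_e)<r_e\}=0$, giving~(\ref{op:oma}.a) and~(\ref{op:noma}.a); nonnegativity~(\ref{op:oma}.d),~(\ref{op:noma}.d) holds because $\mb{P}_e\succeq 0$ by~\eqref{eq:power:e} and $P_u^*(f)=\max\{P_u,P_u^\text{SIC}(f)\}\ge 0$. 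For the OMA orthogonality constraint~\eqref{op:oma:pupe}, the algorithm places URLLC power only on $\mc{F}_u$ and, by~\eqref{eq:Fe}, eMBB power only on $\mc{F}_e=\mc{F}\setminus\mc{F}_u$; the two supports are disjoint, so $P_u(t,f)P_e(t,f)=0$ for every $(t,f)$.

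For the NOMA SIC constraint~(\ref{op:noma}.b) I would invoke monotonicity of $I_{u,e}$ in the URLLC power. By construction $\mb{P}_u^\text{SIC}$ is feasible for~\eqref{eq:op:sic}, i.e. $I_{u,e}(\mb{P}_u^\text{SIC},\mb{P}_e)\ge r_u$; every summand of~\eqref{eq:info:ue} is non-decreasing in the corresponding $P_u(f)$, and $\mb{P}_u^*\succeq\mb{P}_u^\text{SIC}$ by~\eqref{eq:alg:fea}, hence $I_{u,e}(\mb{P}_u^*,\mb{P}_e)\ge I_{u,e}(\mb{P}_u^\text{SIC},\mb{P}_e)\ge r_u$ and $p_{u,e}(\mb{P}_u^*,\mb{P}_e)=0$. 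In the OMA case this constraint is vacuous since $I_{u,e}\equiv 0$ and $\mb{P}_u^\text{SIC}=\mb{0}$.

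The real work is the reliability constraint~\eqref{op:oma:ou}/\eqref{op:noma:ou}, i.e. $p_u(\mb{P}_u^*,\mb{P}_e)\le\epsilon_u$: $p_u$ has no closed form, the true power/interference profile $(\mb{P}_u^*,\mb{P}_e)$ varies across mRBs, whereas the table~\eqref{eq:montecarlopu} is built for a \emph{uniform} URLLC power $P_u$ and a \emph{uniform} interference $P_e$. I would close this gap with a per-realization domination argument that strengthens Proposition~\ref{theo:non-increasing}. Fix any channel realization $\{\gamma_u(f)\}_{f\in\mc{F}_u}$. For each $f\in\mc{F}_u$ one has $P_u^*(f)\ge P_u$ by~\eqref{eq:alg:fea} and $P_e(f)\le P_e(f^\text{max})$ by the definition of $f^\text{max}$; since $x\mapsto \gamma_u(f)x/(1+\gamma_u(f)y)$ is non-decreasing in $x$ and non-increasing in $y$, and $\log_2(1+\cdot)$ is increasing, the $f$-th summand of $F_u I_u(\mb{P}_u^*)$ in~\eqref{eq:info:u:simple} is at least $\log_2\!\big(1+P_u\gamma_u(f)/(1+P_e(f^\text{max})\gamma_u(f))\big)$. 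Summing over $\mc{F}_u$ shows that $I_u(\mb{P}_u^*)$ evaluated at the true profile is pointwise no smaller than the mutual information of the tabulated worst-case configuration, so the outage events are nested and
\[
p_u(\mb{P}_u^*,\mb{P}_e)\;\le\;\hat{p}_u\!\left(P_u,\,P_e(f^\text{max}),\,\Gamma_u,\,\mc{F}_u,\,r_u\right)\;\le\;\epsilon_u,
\]
the last step being exactly the choice~\eqref{eq:fea:minP} of $P_u$. In the OMA case $P_e(f)=0$ on $\mc{F}_u$ and $\mb{P}_u^\text{SIC}=\mb{0}$, so $P_u^*(f)=P_u$ and the same chain applies. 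Together with the first two paragraphs this verifies all constraints of~\eqref{op:oma} and~\eqref{op:noma}. The step needing the most care is precisely this last one: turning ``termwise domination of the mutual information'' into ``nested outage events'' and hence the probability inequality — essentially Proposition~\ref{theo:non-increasing} extended to also exploit monotonicity in the interference $\mb{P}_e$ and applied to the constant profiles $P_u\mb{1}$ and $P_e(f^\text{max})\mb{1}$ underlying the look-up table.
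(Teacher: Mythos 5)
Your proposal is correct and follows essentially the same route as the paper's proof: bound the actual configuration by the worst-case uniform one $(P_u,\,P_e(f^{\max}))$ using monotonicity of the mutual information in the URLLC power and in the interference, note the domination holds per channel realization so the outage events are nested, and conclude $p_u\le\hat{p}_u(P_u,P_e(f^{\max}),\Gamma_u,\mc{F}_u,r_u)\le\epsilon_u$ by the choice~\eqref{eq:fea:minP}. Your treatment is, if anything, slightly more explicit than the paper's on the side constraints (SIC feasibility via monotonicity of $I_{u,e}$ and $\mb{P}_u^*\succeq\mb{P}_u^\text{SIC}$, orthogonality, nonnegativity), but the key idea is identical.
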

\begin{proof}
The allocation found with Algorithm~\ref{alg:N-fea} meets all the constraints of problems~\eqref{op:oma} and~\eqref{op:noma} and such is a \emph{feasible} solution. In facts constraints~\eqref{op:oma:pe} and~\eqref{op:noma:pe} are met by employing~\eqref{eq:power:e}. Constraints \eqref{op:oma:pupe} is satisfied by construction by virtue of orthogonal allocation. Constraints \eqref{op:oma:ou} and \eqref{op:noma:ou} are addressed by employing the lookup table as in \eqref{eq:fea:minP} and, finally, \eqref{op:noma:ue} is met by choosing $P_u^*(f)$ as in \eqref{eq:alg:fea} $\forall f\in\mc{F}_u$.
\end{proof}
It is worth noting that this scheme is generally sub-optimal in terms of power spent, considering that we constrain all channels to act as the worst one. 
On the other hand, for OMA allocation, we can demonstrate the following proposition.
\begin{proposition} \label{theo:oma-optimality}
Given $\Gamma_u$, $\mc{F}_u$, and $r_u$, if the lookup table~\eqref{eq:montecarlopu} contains an entry where $\hat{p}_u(P_u, 0, \Gamma_u, \mc{F}_u, r_u) = \epsilon_u$ exactly, Algorithm~\ref{alg:N-fea} provide the optimal solution of the URLLC OMA allocation problem~\eqref{eq:op:u}, and the optimal power coefficient is $P_u^*(f) = P_u$, $\forall f \in\mc{F}_u$.
\end{proposition}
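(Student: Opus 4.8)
The plan is to specialize the general machinery to the OMA case, certify that Algorithm~\ref{alg:N-fea} returns a feasible point of \eqref{eq:op:u} with total power $F_u P_u$, and then show that no feasible allocation can do better, by reducing an arbitrary feasible allocation to the uniform one through a majorization argument.

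\textbf{Step 1 (OMA structure).} In the OMA case $\mc{F}_e = \mc{F}\setminus\mc{F}_u$ by \eqref{eq:Fe}, hence $P_e(f)=0$ for every $f\in\mc{F}_u$. Therefore $\mb{P}_u^\text{SIC}=\mb{0}$ (as imposed after \eqref{eq:power:sic}) and $P_e(f^\text{max})=0$, so \eqref{eq:fea:minP}--\eqref{eq:alg:fea} give $P_u^*(f)=P_u$ for all $f\in\mc{F}_u$, with $P_u=\min\{P\mid \hat p_u(P,0,\Gamma_u,\mc{F}_u,r_u)\le\epsilon_u\}$. By Proposition~\ref{theo:non-increasing} the map $P\mapsto\hat p_u(P,0,\Gamma_u,\mc{F}_u,r_u)$ is non-increasing, and strictly decreasing on the range where it lies in $(0,1)$ (by the full support of the Rayleigh law); hence the hypothesis that some entry equals $\epsilon_u$ exactly means precisely that this $P_u$ is that entry, and is unique. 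With $\mb{P}_e\equiv\mb{0}$ on $\mc{F}_u$, problem \eqref{eq:op:u} reduces to minimizing $\sum_{f\in\mc{F}_u}P_u(f)$ over $\mb{P}_u\succeq\mb{0}$ subject to $p_u(\mb{P}_u)=\Pr\{\sum_{f\in\mc{F}_u}\log_2(1+\gamma_u(f)P_u(f))\le F_u r_u\}\le\epsilon_u$.

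\textbf{Step 2 (feasibility and value of the algorithm's output).} The uniform vector $\mb{P}_u^*=P_u\mb{1}$ satisfies $\mb{P}_u^*\succeq\mb{0}=\mb{P}_u^\text{SIC}$, and by the definition of $\hat p_u$ in \eqref{eq:montecarlopu}, $p_u(\mb{P}_u^*)=\hat p_u(P_u,0,\Gamma_u,\mc{F}_u,r_u)=\epsilon_u\le\epsilon_u$; so it is feasible (this also follows from Proposition~\ref{theo:feasible}), with objective value $\sum_{f\in\mc{F}_u}P_u^*(f)=F_u P_u$.

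\textbf{Step 3 (optimality).} Let $\mb{P}_u'\succeq\mb{0}$ be any feasible point of the reduced problem, with total power $S'=\sum_{f\in\mc{F}_u}P_u'(f)$, and let $\bar{\mb{P}}=\tfrac{S'}{F_u}\mb{1}$ be the uniform allocation with the same total. Since the $\gamma_u(f)$, $f\in\mc{F}_u$, are i.i.d. and no interference is present, $p_u$ is a symmetric function of its coordinates; moreover $\bar{\mb{P}}$ is majorized by $\mb{P}_u'$, hence reachable from $\mb{P}_u'$ by finitely many Pigou--Dalton transfers. Granting that such a transfer --- moving power $\delta>0$ from a coordinate $f$ with larger $P_u'(f)$ to a coordinate $g$ with smaller $P_u'(g)$, everything else and the sum fixed --- does not increase $p_u$, we get $p_u(\bar{\mb{P}})\le p_u(\mb{P}_u')\le\epsilon_u$, i.e. $\hat p_u(S'/F_u,0,\Gamma_u,\mc{F}_u,r_u)\le\epsilon_u$, so by minimality of $P_u$ we have $S'/F_u\ge P_u$, that is $S'\ge F_u P_u$. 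Together with Step 2 this shows $\mb{P}_u^*=P_u\mb{1}$ attains the infimum, which is the claim.

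\textbf{Main obstacle.} The one nontrivial point is the transfer (Schur-convexity) claim: the outage probability of $\sum_f\log_2(1+\gamma_f P_f)$ with i.i.d. Rayleigh $\gamma_f$ must be Schur-convex in $(P_f)$, equivalently equal power allocation minimizes outage under a sum-power budget. Conditioning on the two affected channel gains does not settle it pointwise, because for fixed $(\gamma_f,\gamma_g)$ the non-outage set in the transfer parameter $a\in[0,c]$ (with $P_f=a$, $P_g=c-a$) is an interval centred at $\tfrac{c}{2}+\tfrac12(1/\gamma_g-1/\gamma_f)\neq\tfrac{c}{2}$; the desired inequality appears only after averaging over $(\gamma_f,\gamma_g)$ and the remaining coordinates. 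I would close this gap either via the Schur--Ostrowski criterion applied to $p_u(\mb{P}_u)=\mathbb{E}_{\gamma}\big[F_{\gamma}\big((2^{\Theta_f}-1)/P_f\big)\,\mathbb{1}\{\Theta_f\ge 0\}\big]$, where $\Theta_f=F_u r_u-\sum_{j\neq f}\log_2(1+\gamma_j P_j)$ and $F_\gamma$ is the Rayleigh CDF, checking $(P_f-P_g)\big(\partial_{P_f}p_u-\partial_{P_g}p_u\big)\ge 0$, or by invoking the known optimality of equal power allocation for outage minimization over i.i.d. parallel fading channels; the rest (existence/uniqueness of $P_u$, finiteness, and the $n\to\infty$ identification $\hat p_u=p_u$) is routine.
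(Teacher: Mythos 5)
Your proposal follows essentially the same route as the paper's proof: reduce the OMA case to i.i.d.\ parallel channels with zero interference (so $\mb{P}_u^\text{SIC}=\mb{0}$ and the algorithm outputs the uniform vector $P_u\mb{1}$ with $P_u$ the smallest feasible table entry), and then rest the optimality claim on the single key fact that, at fixed sum power, equal power allocation minimizes the outage probability over i.i.d.\ parallel channels without interference. The step you single out as your ``main obstacle'' --- Schur-convexity of $p_u$ in the power vector, i.e.\ outage-optimality of equal power --- is exactly the step the paper does not prove either: it is dispatched there by a citation to~\cite{Tse2005}, which is the second of the two closures you yourself propose, so relative to the paper's own standard your argument has no gap once that reference is invoked. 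Where you genuinely add something is Step~3: the paper only asserts equal-power optimality, monotonicity of the outage in the common level $P_u$, and minimality of the table entry, without ever explicitly comparing the algorithm's output against a non-uniform feasible allocation; your majorization/Pigou--Dalton reduction of an arbitrary feasible $\mb{P}_u'$ with total $S'$ to the uniform vector $(S'/F_u)\mb{1}$ makes that comparison precise, and your observation that the transfer inequality fails pointwise in the fading realizations and only holds after averaging is a real subtlety that the paper's prose (and its textbook citation) glosses over. Two small points to tighten: the conclusion $S'/F_u\ge P_u$ should be drawn from strict monotonicity of the true outage in the uniform level together with the hypothesis $\hat p_u(P_u,0,\Gamma_u,\mc{F}_u,r_u)=\epsilon_u$ (since $S'/F_u$ need not be a table entry, ``minimality of $P_u$'' over the table alone does not suffice); and if you want the proof self-contained rather than by citation, the Schur--Ostrowski computation you sketch is indeed needed, because equal-power outage optimality over i.i.d.\ fading channels is, in full generality, a delicate claim (of Telatar-conjecture flavor) that is safe here mainly because the target $\epsilon_u$ is very small.
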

\begin{proof}
See Appendix~\ref{proof:oma-optimality}.
\end{proof}
%
In practical terms, Proposition~\ref{theo:oma-optimality} implies that the solution given by Algorithm~\ref{alg:N-fea} is reasonably close to the optimal one for the OMA allocation. The larger is the set of $P_u$ values in the lookup table~\eqref{eq:montecarlopu}, the closer  we get to the optimal results.

\FS{The computational complexity of Algorithm~\ref{alg:N-fea} depends on: the complexity of the water-filling algorithm, being $\mc{O}(F)$ for the single-user case~\cite[Proposition 1]{Palomar2005}; the complexity of the maximum operator, whose worst-case is $\mc{O}(F)$; the complexity of retrieving the minimum power from the lookup table, which is linear with the number of the element of the table involved in the power coefficient retrieval $\mc{O}(L^\text{fea})$. The total computational complexity is $\mc{O}(3F + L^\text{fea})$, where $L^\text{fea}$ is generally the dominant term. It is worth noting that $L^\text{fea} \ll L$, and the former is equal to the number of $P_u$ values tested in the lookup table, considering that the scheduler is previously informed of $\Gamma_u, \mc{F}_u$, and $r_u$.
We remark that the initialization of the procedure is not considered because the computational load required for creating~\eqref{eq:montecarlopu} can be performed offline. In other words, the population of the look-up table does not directly increase the complexity of the resolution of the URLLC allocation problem.}

\begin{algorithm}
\footnotesize
\caption{Feasible algorithm (N-fea)}
\label{alg:N-fea}
\textbf{Initialize:} Populate the table $\hat{p}_u(P_u, P_e, \Gamma_u, \mc{F}_u, r_u )$\; 
Compute $\mb{P}_e$ through~\eqref{eq:power:e}\;
Compute $\mb{P}_u^\text{SIC}$ through~\eqref{eq:power:sic}\;
$f^{\max} = \argmax_f P_e(f)$\;
$P_u = \min \{P_u \,|\, \hat{p}_u(P_u, P_e(f^{\max}), \Gamma_u, \mc{F}_u, r_u ) \le \epsilon_u\}$\;
$P_u^*(f) = \max\left\{P_u, P_u^\text{SIC}(f)\right\}$, $\forall f \in\mc{F}_u$\;
\textbf{Output:} $\mb{P}_u^*$
\end{algorithm}

\subsection{Constrained block coordinated descent (BCD) optimization}
\label{sec:exhaustive}
To overcome the sub-optimality given by Algorithm~\ref{alg:N-fea}, we propose an iterative algorithm based on BCD optimization, a class of algorithms able to effectively solve large-scale optimization problems~\cite{Xu2017}.
The main idea is based on non-increasing property of the outage probability~\eqref{eq:outage:ufull} given in Proposition~\ref{theo:non-increasing}.

We start from the solution given by Algorithm~\ref{alg:N-fea}. For each iteration, we update the power coefficients one after another, following a deterministic order. Without loss of generality, we assume that the predetermined order is the natural order of set $\mc{F}_u$, i.e., $0, 1, 2, \dots, F_u - 1$.
To take into account the dimension-wise updating process, we denote the power coefficient vector at iteration $i$ where the first $f\in\mc{F}_u$ terms have been updated as 
\begin{equation}
    \mb{P}_u^{(i, f)} =  [P_u^{(i)}(0), \dots, P_u^{(i)}(f-1), P_u^{(i-1)}(f), \dots, P_u^{(i+1)}(F_u-1)]\T.
\end{equation}
The aim is to update each power coefficient only if the resulting outage probability satisfies the reliability requirement. To guarantee the feasibility of using $\mb{P}_u^{(i, f)}$, the power coefficient of frequency component $f\in\mc{F}_u$ is updated as follows
\begin{equation}
    \label{eq:pupdate}
    P_u^{(i)}(f)= 
    \begin{cases}
        \displaystyle
        \max\left\{P_u^\text{SIC}(f), P_u^{(i-1)}(f) - \mu^{(i)}\right\}, \quad \text{if } \hat{p}_u(\mb{P}_u^{(i, f)}, \mb{P}_e, \Gamma_u, \mc{F}_u, r_u ) \le \epsilon_u,  \\
        P_u^{(i-1)}(f), \quad \text{otherwise},
    \end{cases}
\end{equation}
where $\mu^{(i)} > 0$ is the step size at iteration $i$, $\hat{p}_u(\mb{P}_u^{(i, f)}, \mb{P}_e, \Gamma_u, \mc{F}_u, r_u )$ is the Monte Carlo estimation of the outage probability setting as input vector $\mb{P}_u^{(i,f)}$.
To avoid unnecessary computations, when a power coefficient reaches $P_u^\text{SIC}(f)$, the updating process for that frequency is not performed anymore.
To assure the convergence near to the optimal solution, we halve the value of step size when the updating rule~\eqref{eq:pupdate} lets the vector unchanged.
Hence, the step size updating rule is
\begin{equation}
    \label{eq:mupdate}
    \mu^{(i)} =
    \begin{cases}
        \displaystyle
        \mu^{(i-1)} / 2, \quad \text{if } \mb{P}_u^{(i, F_u)} = \mb{P}_u^{(i-1, F_u)} \\
        \mu^{(i-1)}, \quad \text{otherwise}.
    \end{cases}
\end{equation}
The algorithm stops when the step size is lower or equal to a certain threshold $\tau$. The overall algorithm is summarized in Algorithm~\ref{alg:ex}.

Taking into account that the objective function of~\eqref{eq:op:u} is monotonically non-decreasing, we can prove the following Proposition.
\begin{proposition} \label{theo:convergence}
There exists a minimum value of the threshold $\tau$ for which Algorithm~\ref{alg:ex} provides the optimal solution for problem~\eqref{eq:op:u}.
\end{proposition}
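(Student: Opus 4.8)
The plan is to argue that Algorithm~\ref{alg:ex} converges to a point satisfying the Karush-Kuhn-Tucker / optimality conditions of problem~\eqref{eq:op:u}, exploiting that the objective $\sum_{f} P_u(f)$ is separable and linear (hence monotone non-decreasing in each coordinate) while the feasible set $\{\mb{P}_u \succeq \mb{P}_u^\text{SIC} : p_u(\mb{P}_u) \le \epsilon_u\}$ has, by Proposition~\ref{theo:non-increasing}, a "downward-closed above $\mb{P}_u^\text{SIC}$" structure: if $\mb{P}_u$ is feasible then so is every $\mb{P}_u' $ with $\mb{P}_u^\text{SIC} \preceq \mb{P}_u' $ only when $p_u$ stays below $\epsilon_u$, and reducing any coordinate can only raise $p_u$. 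Because decreasing a coordinate strictly decreases the objective, the optimal solution must lie on the boundary of the feasible region, i.e.\ at a point where no single coordinate can be decreased further (down to its floor $P_u^\text{SIC}(f)$) without violating $p_u \le \epsilon_u$.

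The key steps, in order, are: (i) \emph{Feasibility is maintained throughout.} By construction the update rule~\eqref{eq:pupdate} only lowers $P_u^{(i)}(f)$ when the resulting vector still satisfies $\hat p_u \le \epsilon_u$, and it never goes below $P_u^\text{SIC}(f)$; hence every iterate $\mb{P}_u^{(i,f)}$ is feasible for~\eqref{eq:op:u}, and by Proposition~\ref{theo:feasible}-type reasoning the SIC and eMBB constraints are respected. (ii) \emph{Monotone decrease and convergence of the iterates.} The objective $\sum_f P_u^{(i)}(f)$ is non-increasing in $i$ and bounded below by $\sum_f P_u^\text{SIC}(f)$, so it converges; meanwhile the step-size rule~\eqref{eq:mupdate} halves $\mu^{(i)}$ exactly when a full sweep produces no change, so $\mu^{(i)} \to 0$ and the iterates form a Cauchy sequence, converging to some $\mb{P}_u^\infty \succeq \mb{P}_u^\text{SIC}$. (iii) \emph{Characterizing the limit.} At $\mb{P}_u^\infty$, for every coordinate $f$ with $P_u^\infty(f) > P_u^\text{SIC}(f)$, decreasing that coordinate by any $\delta>0$ makes $p_u > \epsilon_u$ (otherwise, for small enough $\mu^{(i)}$, the algorithm would have taken that step); and for coordinates sitting at the floor, the SIC constraint is active. (iv) \emph{Optimality.} Any competing feasible point $\mb{P}_u'$ with strictly smaller objective must have $P_u'(g) < P_u^\infty(g)$ for some $g$; one then shows, using the coordinatewise monotonicity of $I_u$ (Proposition~\ref{theo:non-increasing}) together with the boundary characterization of $\mb{P}_u^\infty$, that $\mb{P}_u'$ cannot be feasible — intuitively, the limit point is a "minimal" feasible vector in the componentwise partial order restricted to the boundary, and the linear objective with all-positive coefficients is minimized at such a minimal boundary point. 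The phrase "if $\tau$ is small enough" enters because with $\tau > 0$ the algorithm halts at $\mu^{(i)} \le \tau$, leaving a residual gap of order $\tau$ per coordinate; letting $\tau \to 0$ recovers exact optimality, and more carefully one argues the output is within $O(F_u \tau)$ of the optimum.

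The main obstacle I expect is step (iv): the feasible region of~\eqref{eq:op:u} is \emph{not convex} (the outage constraint involves the unknown, non-convex function $p_u$), so I cannot simply invoke standard BCD convergence theorems (e.g.\ those of~\cite{Shi2017, Xu2017}) that guarantee convergence to a global optimum. What saves the argument is the special monotone structure from Proposition~\ref{theo:non-increasing}: the super-level set $\{p_u \le \epsilon_u\}$ is an "upper set" in the product order (up to the question of whether it is \emph{connected} and whether its lower boundary is a graph over the remaining coordinates), which makes the linear-objective minimization behave like minimizing over a box-like region. I would need to be careful to state precisely that the true minimizer is attained on the component-wise lower boundary of the feasible set and that the coordinate descent, because it probes each coordinate down to either its SIC floor or the outage boundary with vanishing step size, cannot get trapped at a non-optimal boundary point — the subtlety being potential "staircase" boundary points where reducing one coordinate is blocked but a simultaneous reduction of two coordinates is not. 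Addressing (or explicitly assuming away, via the Monte-Carlo table granularity) that possibility is where the real work lies; a clean way out is to observe that along the linear objective's descent direction $-\mathbf{1}$ the outage probability is monotone, so the one-dimensional slice argument suffices and joint reductions are not needed.
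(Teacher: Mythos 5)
Your steps (i)--(iii) retrace the paper's own argument (the paper likewise notes that the gradient of the objective is $\mb{1}$, that the optimum must lie on the boundary of the feasible set, and then reasons coordinate by coordinate about the stopping point), so the overall route is the same; the issue is step (iv). Your proposed ``clean way out'' resolves the wrong obstruction. Monotonicity of $p_u$ in the componentwise order (Proposition~\ref{theo:non-increasing}) already rules out the scenario you name: if no single coordinate of the limit point can be decreased without violating $p_u \le \epsilon_u$, then no simultaneous componentwise decrease is feasible either, because lowering further coordinates only lowers $I_u$ and raises $p_u$. The dangerous competitor is instead a feasible $\mb{P}_u'$ that is \emph{not comparable} to the limit point $\mb{P}_u^\infty$: some coordinates smaller, others larger, with a strictly smaller sum. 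Your step (iv) tacitly treats ``$P_u'(g) < P_u^\infty(g)$ for some $g$'' as if it implied $\mb{P}_u' \preceq \mb{P}_u^\infty$, and monotonicity of $p_u$ along the direction $-\mathbf{1}$ says nothing about such trade-off moves.

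What your argument actually establishes (and, to be fair, what the paper's own proof establishes, since it only handles a competitor differing in a single coordinate and then asserts the general case ``can be extended'') is that as $\tau \to 0$ the output is a componentwise-minimal feasible point above $\mb{P}_u^\text{SIC}$, i.e.\ a point on the lower (Pareto) frontier of the upper set $\{\mb{P}_u : p_u(\mb{P}_u)\le\epsilon_u\}$. That frontier in general contains many points with different sums, and cyclic coordinate descent over a merely monotone, non-convex feasible set can stall at a non-optimal one: for the upper set $\{(x,y)\succeq 0 : x \ge 10 \ \text{or}\ (x+y \ge 12 \ \text{and}\ y \ge 3)\}$, a coordinate sweep started at $(11,11)$ stops at $(1,11)$ with sum $12$, whereas $(10,0)$ is feasible with sum $10$. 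Hence global optimality cannot follow from Proposition~\ref{theo:non-increasing} and the update rules \eqref{eq:pupdate}--\eqref{eq:mupdate} alone; one needs additional structure of the outage constraint (e.g.\ exchangeability or quasi-convexity of its boundary) or an argument exploiting the specific N-fea initialization of Algorithm~\ref{alg:ex}. Flagging the boundary-geometry subtlety was the right instinct, but the one-dimensional-slice observation does not close the gap, and neither does the paper's proof.
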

\begin{proof}
See Appendix~\ref{proof:convergence}.
\end{proof}
\FS{In theory, the optimal solution is guaranteed to be obtained only in the asymptotic behavior $\tau \rightarrow 0$. When the threshold tends to zero, the algorithm iterates until the border of the feasible set is found (see Appendix~\ref{proof:convergence}). However, this leads to a number of iterations of the algorithm that tends to the infinite; hence, the algorithm cannot be implemented in this way.}
In practice, we can set $\tau$ as the minimum variation of power allowed at the BS to obtain the closest to the optimum solution.

\FS{The complexity of each iteration $i$ of the BCD algorithm is dominated by the estimation of outage probability when applying update rule~\eqref{eq:pupdate}, 
i.e., $\mc{O}(n F_u)$ (see eq.~\eqref{eq:montecarlopu}). Cycling through all the frequencies, the complexity for the iteration $i$ is in the worst case $\mc{O}(n F_u^2)$. Denoting the number of iterations as a function of the threshold as $I(\tau)$, we have a total complexity of $\mc{O}(n F_u^2 I(\tau))$. Even setting a $\tau$ to provide a low number of iterations, the dominant term remains $n$, which needed to be a high value to obtain an accurate estimation of the outage probability. Therefore, this algorithm can be hardly used in a real scenario, but it represents a good benchmark for evaluating the performance of the NOMA approach.}

\begin{algorithm}
\footnotesize
\caption{Constrained BCD (N-BCD)}
\label{alg:ex}
\textbf{Initialize:}
$\mu^{(0)} > 0$, Compute $\mb{P}_e$ from~\eqref{eq:power:e}, $\mb{P}_u^{(0, F_u)}$ using Algorithm~\ref{alg:N-fea}; $i \leftarrow 1$, $\mc{F}^+ \leftarrow \mc{F}_u$\;
\While{$\mu^{(i)} > \tau$}
{
	\For {$f \in\mc{F}^+$}
	{
		Update $P_u^{(i)}(f)$ using~\eqref{eq:pupdate}\;
		\If{$P_u^{(i)}(f) \ge P_u^\text{SIC}(f)$}{$\mc{F}^+ \leftarrow \mc{F}^+ \setminus \{f\}$}
	}
	$i \leftarrow i + 1$\;
	Update $\mu^{(i)}$ using~\eqref{eq:mupdate}
}
\textbf{Output:} $\mb{P}_u^* \leftarrow \mb{P}_u^{(i-1, F_u)}$
\end{algorithm}
\normalsize

\section{Numerical results}
\label{sec:results}
In this Section, the performance comparison between OMA and NOMA for spectrum slicing of eMBB and URLLC traffic is presented. \mm{In the absence of other literature on the subject, our focus is to investigate in depth the various effects that play a role in the computation of the power consumption, such as channel diversity, interference cancellation, and URLLC requirements.}
We consider a resource grid formed by $F = 12$ frequencies and $M = 7$ mini-slots. We consider that the time of a slot is $T = 1$ ms, obtaining $T_m = T/M = 1/7$ ms as the minimum URLLC time slot of the 5G NR standard~\cite{3gpp:access}.
Finally, each mRB has bandwidth $\Delta_f = 180$ kHz.

The reliability requirement of URLLC user is set as $\epsilon_u = 10^{-5}$. The fading channels for both users are Rayleigh distributed with scale parameter $\sqrt{\Gamma_i}$, $i \in \{e, u\}$.
In the Monte Carlo simulation used to obtain the lookup table~\eqref{eq:montecarlopu}, we vary the powers $P_i$, $i \in \{e, u\}$ of each mRB from -30 dBm to 30 dBm per mRB, with granularity 1 dBm.

For each instance of simulation, we place the users in a cell of 500 m in radius and we compute the power consumption using both OMA and NOMA. The distance $d_i$, $i \in \{e, u\}$ of the users from the BS is computed by inverting the well-known wireless path loss formulation, i.e., $ d_i = \sqrt[\psi]{10^{G/10} L_i \left(\frac{c}{f_0 4 \pi}\right)^2 d_0^{\psi-2}}$,
where $G = 17.15$ dB is the overall antenna gain of BS and user; $f_0 = 2$ GHz is the central working frequency; $c$ is the speed of light; $d_0 = 10$ m is the free-space region of the cell near to the BS; $\psi = 4$ is the path loss exponent; $L_i = \sigma^2 / \Gamma_i$ is the path loss. We set the receiver noise as $\sigma^2 = -108$ dBm.

In the case of NOMA, we show the results of Algorithms~\ref{alg:N-fea} and~\ref{alg:ex}, labeled as N-fea and N-BCD, respectively. For this paradigm, we always present the results obtained setting $F_u = F = 12$. We remark that the choice of this parameter does not influence the eMBB allocation while increasing the number of resources available reduces the power coefficients needed for SIC (see~\eqref{eq:power:sic}).
We set the threshold of the BCD approach as $\tau = 10^{-7}$.
In the case of OMA, we present the results for different frequencies reserved for the URLLC data stream transmission. In particular, we set $F_u \in \{3,6,9\}$, labeled as O-3, O-6, O-9, respectively. 
To correctly compare the performance of schemes with different number of resources available, we fix the number of bits to be transmitted in the whole slot as $N_e = 8640$ and $N_u = \frac{2160}{7}$. In this way, the spectral efficiency for NOMA with $M_u = 1$ are $r_u = 1$ bit/s/Hz and $r_e = 4$ bit/s/Hz. For the other $M_u$ values, the spectral efficiencies can be computed trough~\eqref{eq:bit2se}.

\begin{figure}[htb]
    \centering
\begin{tikzpicture}

\definecolor{color0}{rgb}{1,0.270588235294118,0}
\definecolor{color1}{rgb}{1,0.549019607843137,0}

\begin{axis}[
legend cell align={left},
legend style={
  fill opacity=0.8,
  draw opacity=1,
  text opacity=1,
  at={(0.98,0.02)},
  anchor=south east,
  draw=white!80!black
},
tick align=outside,
tick pos=left,
x grid style={white!69.0196078431373!black},
xlabel={\(\displaystyle d_e\) [m]},
xmajorgrids,
xmin=0, xmax=500,
xtick style={color=black},
y grid style={white!69.0196078431373!black},
ylabel={\(\displaystyle P_{u}^\mathrm{SIC}\) [dBm]},
ymajorgrids,
ymin=-35.7136722352853, ymax=39.9976016195109,
ytick style={color=black}
]
\addplot [semithick, blue,  mark=asterisk,  mark options={solid,fill opacity=0}]
table {%
619.498731755915 24.1324802553991
464.558511632561 19.0617905292523
348.369737769362 13.907299415622
261.24044905991 9.2062880871926
195.902699993437 4.04872547884687
146.906300317672 -0.7528385547802
110.164183922677 -5.84722705522243
82.611483599451 -11.1012656896391
61.9498731755915 -16.2543266956548
46.4558511632561 -21.1761558253804
34.8369737769362 -26.7498942800637
26.124044905991 -32.272250696431
};
\addlegendentry{$r_e = 2$}
\addplot [semithick, red,  mark=triangle,  mark options={solid,rotate=180,fill opacity=0}]
table {%
619.498731755915 30.4666579005981
464.558511632561 25.4508262857302
348.369737769362 20.2078496146187
261.24044905991 15.5833254985537
195.902699993437 10.2201754906476
146.906300317672 4.98072444921688
110.164183922677 -1.01289264907221
82.611483599451 -7.85655938961304
61.9498731755915 -14.0272913863048
46.4558511632561 -19.6711867363474
34.8369737769362 -26.2931184262954
26.124044905991 -32.1022561038476
};
\addlegendentry{$r_e = 4$}
\addplot [semithick, green,  mark=square,  mark options={solid,fill opacity=0}]
table {%
619.498731755915 36.5561800806565
464.558511632561 31.5425175372835
348.369737769362 26.281672985994
261.24044905991 21.6460268248102
195.902699993437 15.7577125313655
146.906300317672 8.91040843710603
110.164183922677 1.09979794872688
82.611483599451 -7.66215038136391
61.9498731755915 -13.9381546578608
46.4558511632561 -19.6711811062957
34.8369737769362 -26.2931184262918
26.124044905991 -32.1022561039078
};
\addlegendentry{$r_e = 6$}
\end{axis}

\end{tikzpicture}
    \caption{Average power consumption required by SIC as a function of $d_e$, $F_u = 12$.}
    \label{fig:sic}
    \vspace{-0.3cm}
\end{figure}
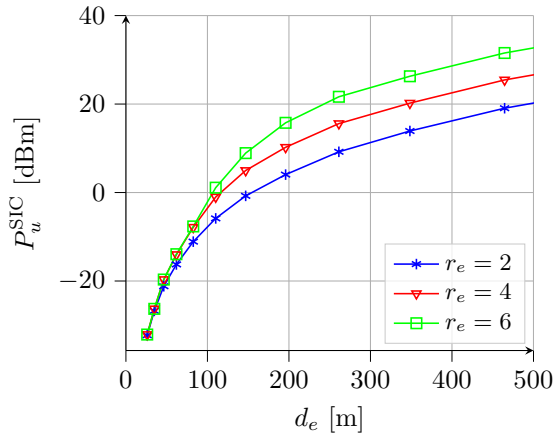

Firstly, we study the behavior of the power needed for the SIC process given in eq.~\eqref{eq:power:sic}.
Fig.~\ref{fig:sic} shows the overall power required for the SIC process $P_u^\text{SIC} = \sum_{f\in\mc{F}_u} P_u^\text{SIC}(f)$ as a function of the distance of the eMBB user $d_e$.
As expected, the closer the eMBB user to the BS the smaller the power needed for the SIC. We can also see that the gap between different spectral efficiencies starts to be relevant when $d_e \ge 100$ m. For higher distances, increasing the spectral efficiency increases the value of eMBB power $\mb{P}_e$ given by solution~\eqref{eq:power:e}; for lower distances, the evaluated power is dominated by the terms $1 / \gamma_e(f)$. We remark that a low $d_e$ means a high average SNR $\Gamma_e$ leading to a higher probability of having high values of $\gamma_e(f)$.

\begin{figure}[bht]
    \centering
\begin{tikzpicture}

\definecolor{color0}{rgb}{1,0.270588235294118,0}
\definecolor{color1}{rgb}{1,0.549019607843137,0}
\definecolor{color2}{rgb}{1,0.843137254901961,0}

\begin{axis}[
legend cell align={left},
legend style={fill opacity=0.8, draw opacity=1, text opacity=1, draw=white!80!black,
at={(0.01, 0.01)}, anchor=south west},
log basis y={10},
tick align=outside,
tick pos=left,
x grid style={white!69.0196078431373!black},
xlabel={\(\displaystyle P_u(f)\) [dBm]},
xmajorgrids,
xmin=-15, xmax=15,
xtick style={color=black},
xtick={-30,-20,-10,0,10,20,30},
xticklabels={
  \(\displaystyle {\ensuremath{-}30}\),
  \(\displaystyle {\ensuremath{-}20}\),
  \(\displaystyle {\ensuremath{-}10}\),
  \(\displaystyle {0}\),
  \(\displaystyle {10}\),
  \(\displaystyle {20}\),
  \(\displaystyle {30}\)
},
y grid style={white!69.0196078431373!black},
ylabel={\(\displaystyle p_u\)},
ymajorgrids,
ymin=1e-06, ymax=5,
ymode=log,
ytick style={color=black},
ytick={1e-07,1e-06,1e-05,0.0001,0.001,0.01,0.1,1,10,100},
yticklabels={
  \(\displaystyle {10^{-7}}\),
  \(\displaystyle {10^{-6}}\),
  \(\displaystyle {10^{-5}}\),
  \(\displaystyle {10^{-4}}\),
  \(\displaystyle {10^{-3}}\),
  \(\displaystyle {10^{-2}}\),
  \(\displaystyle {10^{-1}}\),
  \(\displaystyle {10^{0}}\),
  \(\displaystyle {10^{1}}\),
  \(\displaystyle {10^{2}}\)
}
]
\addplot [semithick, red!54.5098039215686!black, mark=asterisk,   mark options={solid,fill opacity=0}]
table {%
-30 1
-29 1
-28 1
-27 1
-26 1
-25 1
-24 1
-23 1
-22 1
-21 1
-20 1
-19 1
-18 1
-17 1
-16 1
-15 1
-14 1
-13 1
-12 1
-11 1
-10 1
-9 1
-8 1
-7 1
-6 1
-5 1
-4 1
-3 1
-2 1
-1 1
0 1
1 1
2 0.9999536
3 0.9691144
4 0.6467767
5 0.2069503
6 0.0348698
7 0.0037927
8 0.0003149
9 2.38e-05
10 1.4e-06
11 2e-07
12 0
13 0
14 0
15 0
16 0
17 0
18 0
19 0
20 0
21 0
22 0
23 0
24 0
25 0
26 0
27 0
28 0
29 0
30 0
};
\addlegendentry{$M_u=1$}
\addplot [semithick, color0, mark=triangle,   mark options={solid,rotate=180,fill opacity=0}]
table {%
-30 1
-29 1
-28 1
-27 1
-26 1
-25 1
-24 1
-23 1
-22 1
-21 1
-20 1
-19 1
-18 1
-17 1
-16 1
-15 1
-14 1
-13 1
-12 1
-11 1
-10 1
-9 1
-8 1
-7 1
-6 1
-5 1
-4 1
-3 1
-2 0.9999923
-1 0.979494
0 0.6627628
1 0.1945952
2 0.0275155
3 0.0024276
4 0.0001673
5 1.03e-05
6 4e-07
7 0
8 0
9 0
10 0
11 0
12 0
13 0
14 0
15 0
16 0
17 0
18 0
19 0
20 0
21 0
22 0
23 0
24 0
25 0
26 0
27 0
28 0
29 0
30 0
};
\addlegendentry{$M_u=2$}
\addplot [semithick, color1, mark=square,   mark options={solid,fill opacity=0}]
table {%
-30 1
-29 1
-28 1
-27 1
-26 1
-25 1
-24 1
-23 1
-22 1
-21 1
-20 1
-19 1
-18 1
-17 1
-16 1
-15 1
-14 1
-13 1
-12 1
-11 1
-10 1
-9 1
-8 1
-7 1
-6 1
-5 0.9993661
-4 0.8926309
-3 0.3984917
-2 0.0719733
-1 0.0068783
0 0.00046
1 2.62e-05
2 1.6e-06
3 0
4 0
5 0
6 0
7 0
8 0
9 0
10 0
11 0
12 0
13 0
14 0
15 0
16 0
17 0
18 0
19 0
20 0
21 0
22 0
23 0
24 0
25 0
26 0
27 0
28 0
29 0
30 0
};
\addlegendentry{$M_u=4$}
\addplot [semithick, color2, mark=triangle,   mark options={solid,fill opacity=0}]
table {%
-30 1
-29 1
-28 1
-27 1
-26 1
-25 1
-24 1
-23 1
-22 1
-21 1
-20 1
-19 1
-18 1
-17 1
-16 1
-15 1
-14 1
-13 1
-12 1
-11 1
-10 1
-9 1
-8 0.9999904
-7 0.9730572
-6 0.5982906
-5 0.1420089
-4 0.0156767
-3 0.0011082
-2 6.17e-05
-1 1.9e-06
0 1e-07
1 1e-07
2 0
3 0
4 0
5 0
6 0
7 0
8 0
9 0
10 0
11 0
12 0
13 0
14 0
15 0
16 0
17 0
18 0
19 0
20 0
21 0
22 0
23 0
24 0
25 0
26 0
27 0
28 0
29 0
30 0
};
\addlegendentry{$M_u=7$}
\addplot [semithick, red!54.5098039215686!black, dashed, mark=asterisk,   mark options={solid,fill opacity=0}]
table {%
-30 1
-29 1
-28 1
-27 1
-26 1
-25 1
-24 1
-23 1
-22 1
-21 1
-20 1
-19 1
-18 1
-17 1
-16 1
-15 1
-14 1
-13 1
-12 1
-11 1
-10 1
-9 1
-8 1
-7 1
-6 1
-5 1
-4 0.9999982
-3 0.9998083
-2 0.9948538
-1 0.9509226
0 0.7904765
1 0.508101
2 0.2382616
3 0.0812378
4 0.020686
5 0.0041123
6 0.0006788
7 9.2e-05
8 9.3e-06
9 1e-06
10 3e-07
11 0
12 0
13 0
14 0
15 0
16 0
17 0
18 0
19 0
20 0
21 0
22 0
23 0
24 0
25 0
26 0
27 0
28 0
29 0
30 0
};
\addplot [semithick, color0, dashed, mark=triangle,   mark options={solid,rotate=180,fill opacity=0}]
table {%
-30 1
-29 1
-28 1
-27 1
-26 1
-25 1
-24 1
-23 1
-22 1
-21 1
-20 1
-19 1
-18 1
-17 1
-16 1
-15 1
-14 1
-13 1
-12 1
-11 1
-10 1
-9 1
-8 0.9999969
-7 0.9997144
-6 0.9926545
-5 0.9348382
-4 0.7439817
-3 0.4449298
-2 0.1901533
-1 0.0586328
0 0.0135658
1 0.0024472
2 0.0003625
3 4.57e-05
4 3.7e-06
5 6e-07
6 0
7 0
8 0
9 0
10 0
11 0
12 0
13 0
14 0
15 0
16 0
17 0
18 0
19 0
20 0
21 0
22 0
23 0
24 0
25 0
26 0
27 0
28 0
29 0
30 0
};
\addplot [semithick, color1, dashed, mark=square,   mark options={solid,fill opacity=0}]
table {%
-30 1
-29 1
-28 1
-27 1
-26 1
-25 1
-24 1
-23 1
-22 1
-21 1
-20 1
-19 1
-18 1
-17 1
-16 1
-15 1
-14 1
-13 1
-12 0.9999998
-11 0.9999463
-10 0.9976715
-9 0.9684894
-8 0.8331347
-7 0.5557753
-6 0.2659299
-5 0.0906304
-4 0.0226914
-3 0.0043944
-2 0.0006788
-1 9.08e-05
0 1.07e-05
1 7e-07
2 0
3 0
4 0
5 0
6 0
7 0
8 0
9 0
10 0
11 0
12 0
13 0
14 0
15 0
16 0
17 0
18 0
19 0
20 0
21 0
22 0
23 0
24 0
25 0
26 0
27 0
28 0
29 0
30 0
};
\addplot [semithick, color2, dashed, mark=triangle,   mark options={solid,fill opacity=0}]
table {%
-30 1
-29 1
-28 1
-27 1
-26 1
-25 1
-24 1
-23 1
-22 1
-21 1
-20 1
-19 1
-18 1
-17 1
-16 1
-15 1
-14 0.9999895
-13 0.9991509
-12 0.9838534
-11 0.8890443
-10 0.6457468
-9 0.3404704
-8 0.1272906
-7 0.0344958
-6 0.0071385
-5 0.001154
-4 0.000156
-3 1.69e-05
-2 1.7e-06
-1 2e-07
0 0
1 0
2 0
3 0
4 0
5 0
6 0
7 0
8 0
9 0
10 0
11 0
12 0
13 0
14 0
15 0
16 0
17 0
18 0
19 0
20 0
21 0
22 0
23 0
24 0
25 0
26 0
27 0
28 0
29 0
30 0
};
\end{axis}

\end{tikzpicture}
    \caption{Outage probability $p_u$ versus $P_u(f)$, for $\Gamma_u = 30$ dB, $F_u = 12$ and  different $M_u$. $P_u(f) = P_u$, $P_e(f) = P_e$, $\forall f \in\mc{F}_u$. The solid lines are for $P_e(f) = 0$ dBm while dashed lines represent $P_e(f) = -\infty$ dBm, i.e., no interference.}
    \label{fig:outages_M}
    \vspace{-0.3cm}
\end{figure}
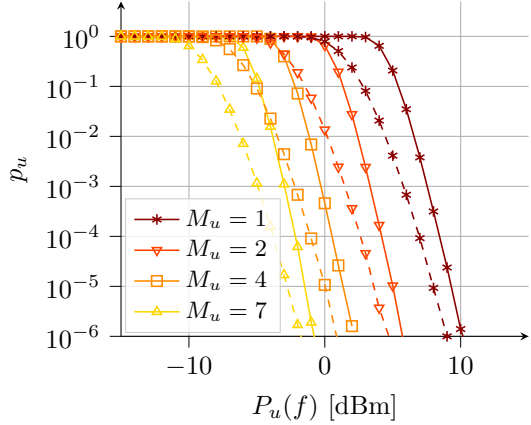

In the following, we show the effect of the URLLC transmission time $M_u$ on the power allocation.
Fig.~\ref{fig:outages_M} shows the outage probability $p_u$ as a function of the power $P_u(f)$, when all transmitting and interference power coefficients are the same for each mRB, i.e., $P_u(f) = P_u$, $P_e(f) = P_e$, $\forall f \in\mc{F}_u$, $\Gamma_u = 30$ dB, and $F_u = 12$. The dashed lines are the outage probability curves with no interference, i.e. $P_e = -\infty$ dBm, while the solid lines represent the outage probability with $P_e = 0$ dBm. We present different plots for different $M_u$. Increasing the value of $M_u$, will reduce the spectral efficiency $r_u$ leading to a reduction of the power needed to meet the target outage probability $\epsilon_u$. Therefore, the best solution is design a system able to exploit the whole latency requirement.
Having shown this result, we will focus only on the results of $M_u = 1$ in the remainder of the Section, remarking that increasing $M_u$ reduces the overall power spent for all the schemes presented.


\begin{figure}[hbt]
    \centering
    \begin{subfigure}[c]{.45\textwidth}
        \centering
\begin{tikzpicture}

\definecolor{color0}{rgb}{1,0.270588235294118,0}
\definecolor{color1}{rgb}{1,0.549019607843137,0}

\begin{axis}[
legend cell align={left},
legend style={
  fill opacity=0.8,
  draw opacity=1,
  text opacity=1,
  at={(0.01,0.6)},
  anchor=west,
  draw=white!80!black
},
tick align=outside,
tick pos=left,
x grid style={white!69.0196078431373!black},
xlabel={\(\displaystyle d_u\) [m]},
xmajorgrids,
xmin=0, xmax=500,
xtick style={color=black},
y grid style={white!69.0196078431373!black},
ylabel={\(\displaystyle P^\mathrm{tot}\) [dBm]},
ymajorgrids,
ymin=13.2182961563393, ymax=40.063882982398,
ytick style={color=black}
]
\addplot [semithick, blue, mark=x, mark options={solid,fill opacity=0}]
table {%
10 15.1715931720987
35.7894736842105 15.1715932958874
61.5789473684211 15.171590754601
87.3684210526316 15.1713070984224
113.157894736842 15.1726579795975
138.947368421053 15.1998767953882
164.736842105263 15.3098236793924
190.526315789474 15.496928041253
216.315789473684 15.7224064281116
242.105263157895 15.9908177481826
267.894736842105 16.3542195018248
293.684210526316 16.7976381991224
319.473684210526 17.2968656613732
345.263157894737 17.8329527625142
371.052631578947 18.3880632091122
396.842105263158 18.948660842975
422.631578947368 19.5223650735725
448.421052631579 20.1521374390391
474.210526315789 20.8121683981057
500 21.4792678500731
};
\addlegendentry{N-fea}
\addplot [semithick, green!50!black, mark=o, mark options={solid,fill opacity=0}]
table {%
10 14.6971136351145
35.7894736842105 14.697131690914
61.5789473684211 14.7352165650249
87.3684210526316 14.8779375356049
113.157894736842 15.0224403221579
138.947368421053 15.134910529958
164.736842105263 15.2702172266142
190.526315789474 15.4438351684967
216.315789473684 15.6631809627694
242.105263157895 15.9281710322887
267.894736842105 16.2620818262119
293.684210526316 16.6534282104904
319.473684210526 17.0947372455849
345.263157894737 17.6052745111627
371.052631578947 18.1626784976198
396.842105263158 18.7444859510718
422.631578947368 19.3456883737941
448.421052631579 19.9877829894858
474.210526315789 20.6479820278921
500 21.3074982995075
};
\addlegendentry{N-BCD}
\addplot [semithick, red!54.5098039215686!black, mark=asterisk, mark options={solid,fill opacity=0}]
table {%
10 14.4385501029783
35.7894736842105 14.4752954203053
61.5789473684211 14.7525113922093
87.3684210526316 15.5797735985234
113.157894736842 17.1147128440292
138.947368421053 19.0748870106089
164.736842105263 21.1458872295994
190.526315789474 23.1818915977169
216.315789473684 25.215257779011
242.105263157895 27.0675090496013
267.894736842105 28.8478611180552
293.684210526316 30.4550098918348
319.473684210526 31.8736644419865
345.263157894737 33.125956970355
371.052631578947 34.2390561071284
396.842105263158 35.2369527726255
422.631578947368 36.1391690055815
448.421052631579 36.961240077524
474.210526315789 37.7155064480234
500 38.4118313103989
};
\addlegendentry{O-3}
\addplot [semithick, color0, mark=triangle, mark options={solid,rotate=180,fill opacity=0}]
table {%
10 19.2260637226592
35.7894736842105 19.2264379254716
61.5789473684211 19.2317084433669
87.3684210526316 19.2497712146487
113.157894736842 19.2926813251704
138.947368421053 19.3751006949347
164.736842105263 19.5175311454024
190.526315789474 19.7317833588177
216.315789473684 20.0371253648775
242.105263157895 20.4280460377359
267.894736842105 20.9386056338373
293.684210526316 21.535204635919
319.473684210526 22.1921531713582
345.263157894737 22.9235076587808
371.052631578947 23.6888648001822
396.842105263158 24.4560214238922
422.631578947368 25.2189646391627
448.421052631579 26.0025041187594
474.210526315789 26.781774955233
500 27.5392540887538
};
\addlegendentry{O-6}
\addplot [semithick, color1, mark=square, mark options={solid,fill opacity=0}]
table {%
10 38.7398916370567
35.7894736842105 38.7398916851497
61.5789473684211 38.7399103802604
87.3684210526316 38.739985245071
113.157894736842 38.7401664078381
138.947368421053 38.7405178984736
164.736842105263 38.7411334417492
190.526315789474 38.742107813843
216.315789473684 38.7436254278053
242.105263157895 38.7457390696843
267.894736842105 38.7486895564457
293.684210526316 38.7525115680445
319.473684210526 38.7573260923097
345.263157894737 38.7636970578647
371.052631578947 38.7716882469299
396.842105263158 38.7812907150035
422.631578947368 38.7927370370745
448.421052631579 38.8069182935457
474.210526315789 38.8238932312197
500 38.843629035759
};
\addlegendentry{O-9}
\end{axis}

\end{tikzpicture}
         \caption{Total power spent.}
        \label{fig:opt_power_ge50}
    \end{subfigure}
    ~
    \begin{subfigure}[c]{.45\textwidth}
        \centering
\begin{tikzpicture}

\definecolor{color0}{rgb}{1,0.270588235294118,0}
\definecolor{color1}{rgb}{1,0.549019607843137,0}

\begin{axis}[
legend cell align={left},
legend style={
  fill opacity=0.8,
  draw opacity=1,
  text opacity=1,
  at={(0.97,0.03)},
  anchor=south east,
  draw=white!80!black
},
log basis y={10},
tick align=outside,
tick pos=left,
x grid style={white!69.0196078431373!black},
xlabel={\(\displaystyle d_u\) [m]},
xmajorgrids,
xmin=0, xmax=500,
xtick style={color=black},
y grid style={white!69.0196078431373!black},
ylabel={\(\displaystyle p_u\)},
ymajorgrids,
ymin=3.05258614423389e-08, ymax=1.31758219583014e-05,
ymode=log,
ytick style={color=black}
]
\addplot [very thin, black]
table {%
0 1e-05
500 1e-05
};
\addlegendentry{$\epsilon_u$}
\addplot [semithick, blue, mark=x, mark options={solid,fill opacity=0}]
table {%
10 0
35.7894736842105 1e-9
61.5789473684211 5e-09
87.3684210526316 4.02203315488035e-08
113.157894736842 5.36056696431789e-07
138.947368421053 3.35350319993593e-06
164.736842105263 4.52868889144287e-06
190.526315789474 4.97082664111618e-06
216.315789473684 5.23358357558164e-06
242.105263157895 5.44909532541638e-06
267.894736842105 5.50177291041183e-06
293.684210526316 5.58620173349823e-06
319.473684210526 5.67063055658463e-06
345.263157894737 5.725505937967103e-06
371.052631578947 5.81468607389507e-06
396.842105263158 5.90787289081267e-06
422.631578947368 6.20105970773027e-06
448.421052631579 6.89424652464786e-06
474.210526315789 7.32145901573567e-06
500 7.30398198662931e-06
};
\addlegendentry{N-fea}
\addplot [semithick, green!50!black, mark=o, mark options={solid,fill opacity=0}]
table {%
10 4.83469431151243e-07
35.7894736842105 6.8853982038986e-06
61.5789473684211 8.69667795535271e-06
87.3684210526316 1e-05
113.157894736842 1e-05
138.947368421053 1e-05
164.736842105263 1e-05
190.526315789474 1e-05
216.315789473684 1e-05
242.105263157895 1e-05
267.894736842105 1e-05
293.684210526316 1e-05
319.473684210526 1e-05
345.263157894737 1e-05
371.052631578947 1e-05
396.842105263158 1e-05
422.631578947368 1e-05
448.421052631579 1e-05
474.210526315789 1e-05
500 1e-05
};
\addlegendentry{N-BCD}
\addplot [semithick, red!54.5098039215686!black, mark=asterisk, mark options={solid,fill opacity=0}]
table {%
10 8.26755544201841e-06
35.7894736842105 8.859963099244014e-06
61.5789473684211 8.87249461198241e-06
87.3684210526316 9.3071608579237e-06
113.157894736842 9.18370252386774e-06
138.947368421053 9.51710737916532e-06
164.736842105263 9.86746243592995e-06
190.526315789474 1e-05
216.315789473684 1e-05
242.105263157895 1e-05
267.894736842105 1e-05
293.684210526316 1e-05
319.473684210526 1e-05
345.263157894737 1e-05
371.052631578947 1e-05
396.842105263158 1e-05
422.631578947368 1e-05
448.421052631579 1e-05
474.210526315789 1e-05
500 1e-05
};
\addlegendentry{O-3}
\addplot [semithick, color0, mark=triangle, mark options={solid,rotate=180,fill opacity=0}]
table {%
10 8.0e-06
35.7894736842105 8.02749326747132e-06
61.5789473684211 8.03461289783804e-06
87.3684210526316 8.18521666571825e-06
113.157894736842 8.37601505823361e-06
138.947368421053 8.30722837229853e-06
164.736842105263 8.34786560898249e-06
190.526315789474 8.43734586784494e-06
216.315789473684 8.51536058490936e-06
242.105263157895 8.59035551617054e-06
267.894736842105 8.70e-6
293.684210526316 8.9e-06
319.473684210526 9.2e-06
345.263157894737 9.3e-06
371.052631578947 9.4e-06
396.842105263158 9.5e-06
422.631578947368 1e-05
448.421052631579 1e-05
474.210526315789 1e-05
500 1e-05
};
\addlegendentry{O-6}
\addplot [semithick, color1, mark=square, mark options={solid,fill opacity=0}]
table {%
10 5.5e-6
35.7894736842105 7.082782051243019e-06
61.5789473684211 7.396702384533e-06
87.3684210526316 7.47864159877952e-06
113.157894736842 8.4764885386639e-06
138.947368421053 8.57545713781539e-06
164.736842105263 8.59808592686239e-06
190.526315789474 8.5902031017426e-06
216.315789473684 8.59247186861084e-06
242.105263157895 8.59366816407762e-06
267.894736842105 8.7e-06
293.684210526316 8.9e-06
319.473684210526 9.2e-06
345.263157894737 1e-05
371.052631578947 1e-05
396.842105263158 1e-05
422.631578947368 1e-05
448.421052631579 1e-05
474.210526315789 1e-05
500 1e-05
};
\addlegendentry{O-9}
\end{axis}

\end{tikzpicture}
        \caption{Estimated outage $p_u$.}
        \label{fig:opt_outage_ge50}
    \end{subfigure}
    \caption{Average results obtained as a function of URLLC distance $d_u$, $d_e = 146.9$ m.}
    \label{fig:power-outage_ge50}
 \end{figure}
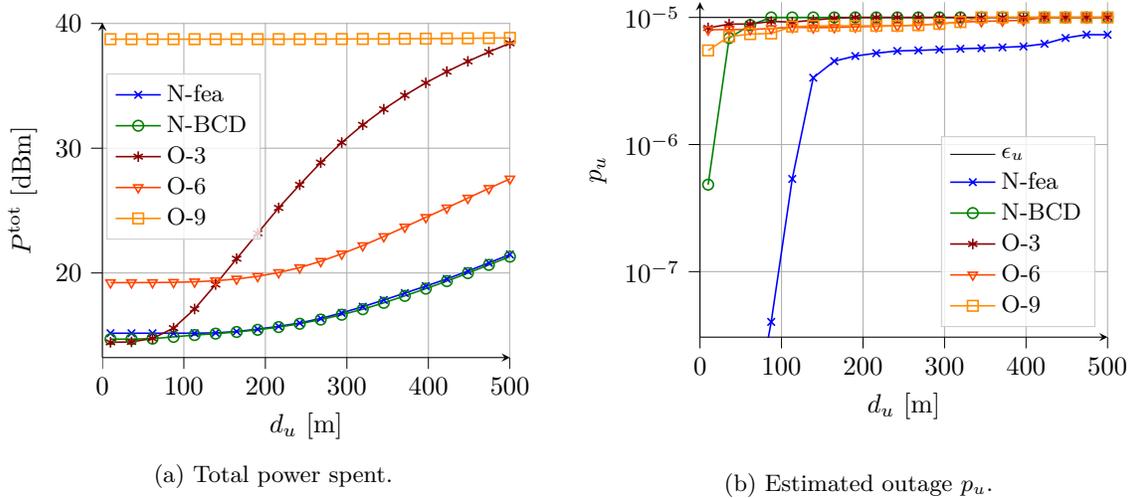

Fig.~\ref{fig:opt_power_ge50} shows the total power spent $P^\text{tot}$ as a function of $d_u$, when user $e$ is placed at $d_e = 146.9$ m (corresponding to $\Gamma_e = 50$ dB).
We can see that N-fea and N-BCD assure a lower power consumption than the OMA paradigms, as soon as $d_u \ge 75$ m. N-BCD performs better than OMA schemes also for $d_u \ge 65$ m. The performance gap between N-BCD and N-fea is negligible for every value of $d_u$ under analysis.
On the other hand, the OMA paradigm performs slightly better on short distances, i.e., $d_u < 65$ m. In particular, the best results in a high SNR regime are attained by O-3.
Moreover, Fig.~\ref{fig:opt_outage_ge50} presents the corresponding estimated $p_u$, evaluated with the allocated power coefficients $\mb{P}_u$ and $\mb{P}_e$, while the channel gains of $u$ are randomly generated. For the OMA schemes, the target outage is met, guaranteeing $p_u \le \epsilon_u$. The slight differences from the target outage obtained for short distances are due to the quantization of power available in the lookup table solution, set as 1 dB.
The outage probability of N-fea is always lower than $\epsilon_u$, proving that this scheme provides a feasible but not optimal solution. On the other hand, N-BCD reaches exactly the target outage probability until $d_u \ge 85$ m. For shorter distances, the power allocation is dominated by the SIC process limiting the minimum power needed, as we will see in the following. 

\begin{figure}
    \centering
\begin{tikzpicture}

\definecolor{color0}{rgb}{1,0.270588235294118,0}
\definecolor{color1}{rgb}{1,0.549019607843137,0}

\begin{axis}[
legend cell align={left},
legend style={
  fill opacity=0.8,
  draw opacity=1,
  text opacity=1,
  at={(0.97,0.03)},
  anchor=south east,
  draw=white!80!black
},
tick align=outside,
tick pos=left,
x grid style={white!69.0196078431373!black},
xlabel={\(\displaystyle d_u\) [m]},
xmajorgrids,
xmin=0, xmax=500,
xtick style={color=black},
y grid style={white!69.0196078431373!black},
ylabel={\(\displaystyle P_{u}^\mathrm{tot}\) [dBm]},
ymajorgrids,
ymin=-27.5510294630164, ymax=37.8598976196467,
ytick style={color=black}
]
\addplot [semithick, blue, mark=x, mark options={solid,fill opacity=0}]
table {%
10 8.16437144595252
35.7894736842105 8.16437206739808
61.5789473684211 8.16435930958481
87.3684210526316 8.16293510487428
113.157894736842 8.16971437419207
138.947368421053 8.30453818810225
164.736842105263 8.81776764144845
190.526315789474 9.59612553021966
216.315789473684 10.4144909092033
242.105263157895 11.2653270704237
267.894736842105 12.2658905062057
293.684210526316 13.322912325622
319.473684210526 14.3648325495484
345.263157894737 15.3618742126022
371.052631578947 16.299081456697
396.842105263158 17.1729661350396
422.631578947368 18.0099855622377
448.421052631579 18.8773486216606
474.210526315789 19.7414079979756
500 20.5780593857304
};
\addlegendentry{N-fea}
\addplot [semithick, green!50!black, mark=o, mark options={solid,fill opacity=0}]
table {%
10 4.98063720209382
35.7894736842105 4.98080634617838
61.5789473684211 5.32512536012742
87.3684210526316 6.43658786447372
113.157894736842 7.35784247631128
138.947368421053 7.9770081904725
164.736842105263 8.63832149011596
190.526315789474 9.38577698425366
216.315789473684 10.2100710774003
242.105263157895 11.0766502744233
267.894736842105 12.0256603329684
293.684210526316 12.9951473317445
319.473684210526 13.9584606875769
345.263157894737 14.9512300457033
371.052631578947 15.9283968632254
396.842105263158 16.8618919820094
422.631578947368 17.7575359667231
448.421052631579 18.6554701142672
474.210526315789 19.5301815960686
500 20.3656949391964
};
\addlegendentry{N-BCD}
\addplot [semithick, white!41.1764705882353!black, dashed]
table {%
826.11483599451 4.98072444921688
619.498731755915 4.98072444921688
464.558511632561 4.98072444921688
348.369737769362 4.98072444921688
261.24044905991 4.98072444921688
195.902699993437 4.98072444921688
146.906300317672 4.98072444921688
110.164183922677 4.98072444921688
82.611483599451 4.98072444921688
61.9498731755915 4.98072444921688
46.4558511632561 4.98072444921688
34.8369737769362 4.98072444921688
26.124044905991 4.98072444921688
19.5902699993437 4.98072444921688
14.6906300317672 4.98072444921688
};
\addlegendentry{SIC}
\addplot [semithick, black, dotted]
table {%
826.11483599451 7.28458246210814
619.498731755915 7.28458246210814
464.558511632561 7.28458246210814
348.369737769362 7.28458246210814
261.24044905991 7.28458246210814
195.902699993437 7.28458246210814
146.906300317672 7.28458246210814
110.164183922677 7.28458246210814
82.611483599451 7.28458246210814
61.9498731755915 7.28458246210814
46.4558511632561 7.28458246210814
34.8369737769362 7.28458246210814
26.124044905991 7.28458246210814
19.5902699993437 7.28458246210814
14.6906300317672 7.28458246210814
};
\addlegendentry{IL}
\addplot [semithick, red!54.5098039215686!black, mark=asterisk, mark size=2.5, mark options={solid,fill opacity=0}]
table {%
10 -24.5778055047136
35.7894736842105 -6.20527732742101
61.5789473684211 3.19471322588543
87.3684210526316 9.22108341487226
113.157894736842 13.7407383849723
138.947368421053 17.2340449702466
164.736842105263 20.1349676716927
190.526315789474 22.5777553551018
216.315789473684 24.7570814687083
242.105263157895 26.7068543563766
267.894736842105 28.7386787623885
293.684210526316 30.5928038816919
319.473684210526 32.1232561553263
345.263157894737 33.1158642746377
371.052631578947 33.7720138591464
396.842105263158 34.20477562611
422.631578947368 34.4827189207138
448.421052631579 34.6808994724534
474.210526315789 34.8136671581863
500 34.8866736613438
};
\addlegendentry{O-3}
\addplot [semithick, color0, mark=triangle, mark options={solid,rotate=180,fill opacity=0}]
table {%
10 -25.4637117061325
35.7894736842105 -18.9005696034033
61.5789473684211 -9.41193998448795
87.3684210526316 -3.33762677309339
113.157894736842 1.13646036215494
138.947368421053 4.66431414763419
164.736842105263 7.64490728860734
190.526315789474 10.1450145219049
216.315789473684 12.3520325441283
242.105263157895 14.2629678181684
267.894736842105 16.0696136814329
293.684210526316 17.6887518090728
319.473684210526 19.1374884037473
345.263157894737 20.506556344966
371.052631578947 21.7652885234294
396.842105263158 22.9076293455542
422.631578947368 23.9604038176035
448.421052631579 24.978489935929
474.210526315789 25.9434211366196
500 26.8464409941652
};
\addlegendentry{O-6}
\addplot [semithick, color1, mark=square, mark options={solid,fill opacity=0}]
table {%
10 -27.4464764162709
35.7894736842105 -20.4067817739271
61.5789473684211 -13.8392736642788
87.3684210526316 -7.68832506483348
113.157894736842 -3.16610351052543
138.947368421053 0.366087108152435
164.736842105263 3.32147929564205
190.526315789474 5.82945020609466
216.315789473684 8.09150072999331
242.105263157895 10.0385106303451
267.894736842105 11.8128301683309
293.684210526316 13.3807293249397
319.473684210526 14.7861544160268
345.263157894737 16.1416681349004
371.052631578947 17.4024774416099
396.842105263158 18.5532180083087
422.631578947368 19.6189977309026
448.421052631579 20.658417266568
474.210526315789 21.6472604788481
500 22.5735906693527
};
\addlegendentry{O-9}
\end{axis}

\end{tikzpicture}
    \caption{Average URLLC power spent as a function of $d_u$, $d_e = 146.9$.}
    \label{fig:urllc_powers_ge50}
    \vspace{-0.3cm}
\end{figure}
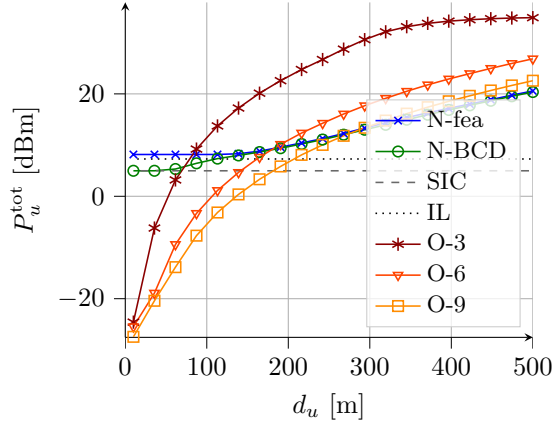

To explain the (slight) superiority of OMA to NOMA for small $d_u$ -or high $\Gamma_u$ regime-, we present the power needed for URLLC and eMBB requirements separately.
In Fig.~\ref{fig:urllc_powers_ge50}, we plot the power spent for the URLLC user $P_u^\text{tot} = \sum_{f\in\mc{F}_u} P_u(f)$.
In this figure, we show the power allocated using the various NOMA and OMA paradigms, as well as the power needed by the SIC process~\eqref{eq:power:sic}, namely SIC, and the power bound of the interference-limited scenario~\eqref{eq:power:u:IL}, namely IL. It is worth noting that these last two results depend on the eMBB power only.
The powers spent for the NOMA schemes are dominated by different effects for short distances. For N-fea, the interference-limited bound is the minimum power achievable. We remark that the power computed by~\eqref{eq:alg:fea} is obtained assuming that all the channels experience the maximum interference. Hence, the mutual information achievable using the power coefficient given by N-fea tends to the approximation~\eqref{eq:info:u:IL} on closer distances. 
On the other hand, N-BCD naturally exploits also the channels with no interference, enabling the possibility of reaching the power needed for the SIC process, which is the lower limit of the NOMA approach.
It is worth noting that even a small increment of URLLC power employed leads to a considerable reduction of the outage probability (see Fig.~\ref{fig:power-outage_ge50}) due to the non-linear relation between these quantities.
For the OMA schemes, the power spent decreases when resources reserved for the URLLC increase. For short distances, these schemes may consume less power w.r.t. NOMA  due to the lack of interference and lower limits. We remark that the frequency diversity gain provided for very short distances ($d_u < 35$ m) is negligible w.r.t. to the gain given by the mean SNR, resulting in comparable performances for the three OMA schemes.
Finally, in Table~\ref{tab:embb_power}, we present the average power spent for the eMBB. Here, the two NOMA schemes are not presented because the eMBB power allocation is the same for both N-fea and N-BCD. From the table, we can infer that the eMBB power plays the dominant role in the power allocation, explaining why O-9 total power is almost flat in Fig.~\ref{fig:opt_power_ge50}, even if it can benefit from the lowest URLLC power consumption. The same consideration can be made also for O-6. 
However, O-3 and NOMA have comparable performance, and the best allocation scheme depends only on the URLLC power consumption for $d_e \ge 146.9$, clarifying the results obtained in Fig.~\ref{fig:power-outage_ge50}.

\begin{table}[b]
    \centering
    \caption{Average eMBB power spent in dBm.}
    \begin{tabular}{cr|rrrr}
\toprule
  $\Gamma_e$ [dB] & $d_e$ [m] &  NOMA &   O-3 &   O-6 &   O-9\\
\midrule
30 & 464.56 & 33.21 & 34.18 & 38.89 & 58.27 \\
40 & 261.2 & 23.36 & 24.32 & 29.09 & 48.54 \\
50 & 146.9 & 14.21 & 14.44 & 19.23 & 38.74 \\
60 & 82.6 & 5.84 &  5.84 &  9.05 & 28.44 \\
70 & 46.5 & -1.87 & -1.87 & -0.64 & 18.79 \\
80 & 26.1 & -9.67 & -9.67 & -9.67 &  8.65 \\
\bottomrule
\end{tabular}

    \label{tab:embb_power}
\end{table}

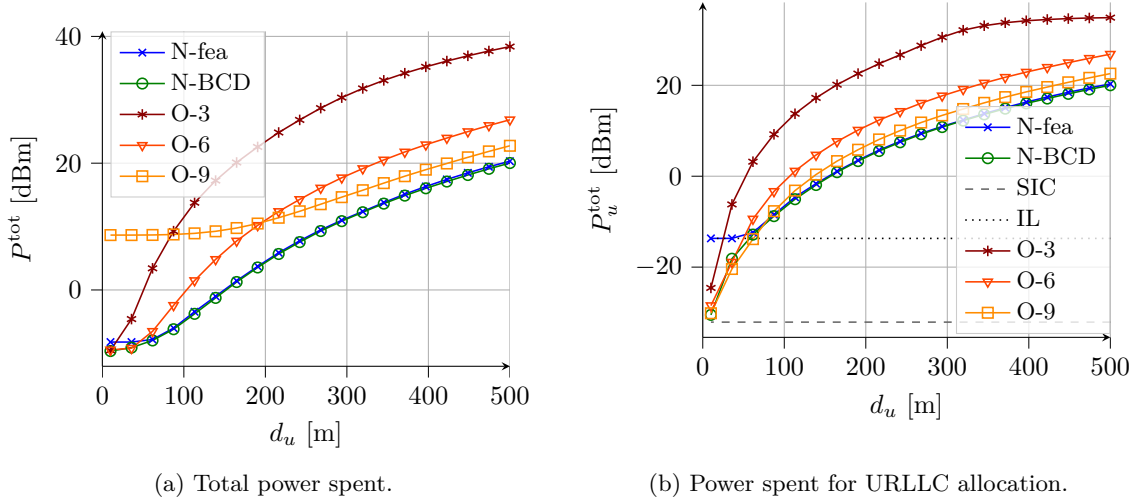
\begin{figure}[htb]
    \centering
    \begin{subfigure}[b]{0.45\textwidth}
        \centering
\begin{tikzpicture}

\definecolor{color0}{rgb}{1,0.270588235294118,0}
\definecolor{color1}{rgb}{1,0.549019607843137,0}

\begin{axis}[
legend cell align={left},
legend style={
  fill opacity=0.8,
  draw opacity=1,
  text opacity=1,
  at={(0.02,1)},
  anchor=north west,
  draw=white!80!black
},
tick align=outside,
tick pos=left,
x grid style={white!69.0196078431373!black},
xlabel={\(\displaystyle d_u\) [m]},
xmajorgrids,
xmin=0, xmax=500,
xtick style={color=black},
y grid style={white!69.0196078431373!black},
ylabel={\(\displaystyle P^\mathrm{tot}\) [dBm]},
ymajorgrids,
ymin=-12.0407406620165, ymax=40.796146736857,
ytick style={color=black}
]
\addplot [semithick, blue, mark=x, mark options={solid,fill opacity=0}]
table {%
10 -8.21972767916534
35.7894736842105 -8.21872753993189
61.5789473684211 -7.81274129830055
87.3684210526316 -6.01582792134604
113.157894736842 -3.43953038053804
138.947368421053 -1.00680256603869
164.736842105263 1.44049827492306
190.526315789474 3.72193999515157
216.315789473684 5.87182371601704
242.105263157895 7.75767275574649
267.894736842105 9.47685048342098
293.684210526316 11.0002126978084
319.473684210526 12.3862675918048
345.263157894737 13.784417190356
371.052631578947 15.1097564933252
396.842105263158 16.3258220632365
422.631578947368 17.4380810154932
448.421052631579 18.4735286085741
474.210526315789 19.4312308623211
500 20.3148826982378
};
\addlegendentry{N-fea}
\addplot [semithick, green!50!black, mark=o, mark options={solid,fill opacity=0}]
table {%
10 -9.63906396206772
35.7894736842105 -9.09833271619754
61.5789473684211 -7.96660625750494
87.3684210526316 -6.19572544985001
113.157894736842 -3.76993696031505
138.947368421053 -1.25453723397683
164.736842105263 1.25098082404199
190.526315789474 3.54528545191095
216.315789473684 5.67533637103849
242.105263157895 7.54813773676331
267.894736842105 9.29398624608532
293.684210526316 10.8544542049103
319.473684210526 12.2605471837262
345.263157894737 13.6190543206441
371.052631578947 14.8832680774511
396.842105263158 16.0371813479223
422.631578947368 17.1031424181705
448.421052631579 18.1337198942945
474.210526315789 19.1095855177537
500 20.0218846416991
};
\addlegendentry{N-BCD}
\addplot [semithick, red!54.5098039215686!black, mark=asterisk, mark options={solid,fill opacity=0}]
table {%
10 -9.53655702605354
35.7894736842105 -4.5920813222556
61.5789473684211 3.41335972477796
87.3684210526316 9.27499167824545
113.157894736842 13.7627732294564
138.947368421053 17.2540355005863
164.736842105263 20.1079475993002
190.526315789474 22.5619723835926
216.315789473684 24.8375829342392
242.105263157895 26.8247200948108
267.894736842105 28.6882485781872
293.684210526316 30.3453961238624
319.473684210526 31.7948757354952
345.263157894737 33.0670395424459
371.052631578947 34.1935294599047
396.842105263158 35.2008111634646
422.631578947368 36.1098299478588
448.421052631579 36.9369747796297
474.210526315789 37.6951188872076
500 38.3944700369082
};
\addlegendentry{O-3}
\addplot [semithick, color0, mark=triangle, mark options={solid,rotate=180,fill opacity=0}]
table {%
10 -9.4520928120247
35.7894736842105 -9.18448886749927
61.5789473684211 -6.53107548151286
87.3684210526316 -2.43001969910734
113.157894736842 1.48258948918448
138.947368421053 4.82134996216839
164.736842105263 7.72467376700635
190.526315789474 10.1900495488254
216.315789473684 12.3791809413323
242.105263157895 14.2804716571
267.894736842105 16.0811685852462
293.684210526316 17.6967139933575
319.473684210526 19.1431936112423
345.263157894737 20.5107196838051
371.052631578947 21.7684046917344
396.842105263158 22.9100249927128
422.631578947368 23.962283875346
448.421052631579 24.9799771799037
474.210526315789 25.9446121150939
500 26.8474084098068
};
\addlegendentry{O-6}
\addplot [semithick, color1, mark=square, mark options={solid,fill opacity=0}]
table {%
10 8.65687743531534
35.7894736842105 8.65692645610477
61.5789473684211 8.67594049818057
87.3684210526316 8.75125993203323
113.157894736842 8.92829875905359
138.947368421053 9.25249462536185
164.736842105263 9.76780785833403
190.526315789474 10.4760968041834
216.315789473684 11.390838964541
242.105263157895 12.4104577573744
267.894736842105 13.5239921522582
293.684210526316 14.6406594921265
319.473684210526 15.7326889218347
345.263157894737 16.8539778371203
371.052631578947 17.9460152988992
396.842105263158 18.976175337004
422.631578947368 19.9533566169806
448.421052631579 20.9237328747784
474.210526315789 21.8598486161755
500 22.7461436861171
};
\addlegendentry{O-9}
\end{axis}

\end{tikzpicture}
        \caption{Total power spent.}
        \label{fig:opt_power_ge80}
    \end{subfigure}
    ~
    \begin{subfigure}[b]{0.45\textwidth}
        \centering
\begin{tikzpicture}

\definecolor{color0}{rgb}{1,0.270588235294118,0}
\definecolor{color1}{rgb}{1,0.549019607843137,0}

\begin{axis}[
legend cell align={left},
legend style={
  fill opacity=0.8,
  draw opacity=1,
  text opacity=1,
  at={(1,0.01)},
  anchor=south east,
  draw=white!80!black
},
tick align=outside,
tick pos=left,
x grid style={white!69.0196078431373!black},
xlabel={\(\displaystyle d_u\) [m]},
xmajorgrids,
xmin=0, xmax=500,
xtick style={color=black},
y grid style={white!69.0196078431373!black},
ylabel={\(\displaystyle P_{u}^\mathrm{tot}\) [dBm]},
ymajorgrids,
ymin=-35.4517025921071, ymax=38.2361201496034,
ytick style={color=black}
]
\addplot [semithick, blue, mark=x, mark options={solid,fill opacity=0}]
table {%
10 -13.6759989259186
35.7894736842105 -13.6724868654733
61.5789473684211 -12.388489238304
87.3684210526316 -8.46188041487788
113.157894736842 -4.61966009073049
138.947368421053 -1.64113430777384
164.736842105263 1.09085881382909
190.526315789474 3.51859228280263
216.315789473684 5.74900926939638
242.105263157895 7.67851560627352
267.894736842105 9.42372854460981
293.684210526316 10.9628747999222
319.473684210526 12.3591635965994
345.263157894737 13.7647906018595
371.052631578947 15.0953003470656
396.842105263158 16.3149008895511
422.631578947368 17.4296297845913
448.421052631579 18.4668715017185
474.210526315789 19.4258919915838
500 20.3105272278884
};
\addlegendentry{N-fea}
\addplot [semithick, green!50!black, mark=o, mark options={solid,fill opacity=0}]
table {%
10 -30.5052298288847
35.7894736842105 -18.1535641303458
61.5789473684211 -12.8452423379621
87.3684210526316 -8.78296690747432
113.157894736842 -5.05860335597375
138.947368421053 -1.92914805102084
164.736842105263 0.885070589228212
190.526315789474 3.33328642688967
216.315789473684 5.54675286848188
242.105263157895 7.46503017249972
267.894736842105 9.23856515659119
293.684210526316 10.8158362186172
319.473684210526 12.232644547151
345.263157894737 13.5986642256138
371.052631578947 14.8680366671846
396.842105263158 16.0255086581044
422.631578947368 17.0940129026758
448.421052631579 18.1265205406964
474.210526315789 19.1038359596069
500 20.0172250263004
};
\addlegendentry{N-BCD}
\addplot [semithick, white!41.1764705882353!black, dashed]
table {%
500 -32.1022561038476
10 -32.1022561038476
};
\addlegendentry{SIC}
\addplot [semithick, black, dotted]
table {%
500 -13.6819989259186
10 -13.6819989259186
};
\addlegendentry{IL}
\addplot [semithick, red!54.5098039215686!black, mark=asterisk, mark options={solid,fill opacity=0}]
table {%
10 -24.5778055047136
35.7894736842105 -6.20527732742101
61.5789473684211 3.19471322588543
87.3684210526316 9.22108341487226
113.157894736842 13.7407383849723
138.947368421053 17.2340449702466
164.736842105263 20.1349676716927
190.526315789474 22.5777553551018
216.315789473684 24.7570814687083
242.105263157895 26.7068543563766
267.894736842105 28.7386787623885
293.684210526316 30.5928038816919
319.473684210526 32.1232561553263
345.263157894737 33.1158642746377
371.052631578947 33.7720138591464
396.842105263158 34.20477562611
422.631578947368 34.4827189207138
448.421052631579 34.6808994724534
474.210526315789 34.8136671581863
500 34.8866736613438
};
\addlegendentry{O-3}
\addplot [semithick, color0, mark=triangle, mark options={solid,rotate=180,fill opacity=0}]
table {%
10 -28.4637117061325
35.7894736842105 -18.9005696034033
61.5789473684211 -9.41193998448795
87.3684210526316 -3.33762677309339
113.157894736842 1.13646036215494
138.947368421053 4.66431414763419
164.736842105263 7.64490728860734
190.526315789474 10.1450145219049
216.315789473684 12.3520325441283
242.105263157895 14.2629678181684
267.894736842105 16.0696136814329
293.684210526316 17.6887518090728
319.473684210526 19.1374884037473
345.263157894737 20.506556344966
371.052631578947 21.7652885234294
396.842105263158 22.9076293455542
422.631578947368 23.9604038176035
448.421052631579 24.978489935929
474.210526315789 25.9434211366196
500 26.8464409941652
};
\addlegendentry{O-6}
\addplot [semithick, color1, mark=square, mark options={solid,fill opacity=0}]
table {%
10 -30.14464764162709
35.7894736842105 -20.4067817739271
61.5789473684211 -13.8392736642788
87.3684210526316 -7.68832506483348
113.157894736842 -3.16610351052543
138.947368421053 0.366087108152435
164.736842105263 3.32147929564205
190.526315789474 5.82945020609466
216.315789473684 8.09150072999331
242.105263157895 10.0385106303451
267.894736842105 11.8128301683309
293.684210526316 13.3807293249397
319.473684210526 14.7861544160268
345.263157894737 16.1416681349004
371.052631578947 17.4024774416099
396.842105263158 18.5532180083087
422.631578947368 19.6189977309026
448.421052631579 20.658417266568
474.210526315789 21.6472604788481
500 22.5735906693527
};
\addlegendentry{O-9}
\end{axis}

\end{tikzpicture}
        \caption{Power spent for URLLC allocation.}
        \label{fig:opt_urllc_power_ge50}
     \end{subfigure}
     \caption{Average results obtained as a function of URLLC distance $d_u$, $d_e = 26.1$ m.}
     \label{fig:powers_ge80}
\end{figure}

In Fig.~\ref{fig:powers_ge80}, we present the power results in the extreme case of $d_e = 26.1$ m. In this case, the allocated eMBB power is low, and it is the same for NOMA, O-3, and O-6, as shown in Table~\ref{tab:embb_power}. With this power allocation, the N-BCD approach is never limited by the SIC, and its performances are the best for every value of $d_u$. Note that in this extreme case, the best performances between the OMA schemes are attained by O-6, which consumes the same eMBB power but benefits from the frequency diversity gain for URLLC allocation.



\begin{figure}
    \centering
    \begin{subfigure}[b]{0.45\textwidth}
        \centering
\begin{tikzpicture}

\definecolor{color0}{rgb}{1,0.270588235294118,0}
\definecolor{color1}{rgb}{1,0.549019607843137,0}

\begin{axis}[
legend cell align={left},
legend style={
  fill opacity=0.8,
  draw opacity=1,
  text opacity=1,
  at={(0.03,0.97)},
  anchor=north west,
  draw=white!80!black
},
tick align=outside,
tick pos=left,
x grid style={white!69.0196078431373!black},
xlabel={\(\displaystyle d_e\) [m]},
xmajorgrids,
xmin=0, xmax=500,
xtick style={color=black},
y grid style={white!69.0196078431373!black},
ylabel={\(\displaystyle P^\mathrm{tot}\) [dBm]},
ymajorgrids,
ymin=-3.32551389508208, ymax=59.2315729399766,
ytick style={color=black}
]
\addplot [semithick, blue, mark=x,  mark options={solid,fill opacity=0}]
table {%
35.7894736842105 0.922998929297963
61.5789473684211 4.70544094786111
87.3684210526316 8.1943103506262
113.157894736842 11.1425358516609
138.947368421053 14.2418107279149
164.736842105263 17.2243014764095
190.526315789474 19.6228742988509
216.315789473684 21.5618421974277
242.105263157895 23.1715573480657
267.894736842105 24.5414539769746
293.684210526316 25.7311111040454
319.473684210526 26.7811817076895
345.263157894737 27.7203134969715
371.052631578947 28.5693264408351
396.842105263158 29.3437715663386
422.631578947368 30.0555441602076
448.421052631579 30.713932723639
474.210526315789 31.3263213201121
500 31.8986725823469
};
\addlegendentry{N-fea}
\addplot [semithick, green!50!black, mark=o,  mark options={solid,fill opacity=0}]
table {%
35.7894736842105 0.618248251534864
61.5789473684211 4.53767648622902
87.3684210526316 8.13485469728555
113.157894736842 11.1118413031869
138.947368421053 14.1985520302479
164.736842105263 17.1589537449158
190.526315789474 19.5431788433703
216.315789473684 21.4730385955263
242.105263157895 23.0766474401226
267.894736842105 24.4422218998936
293.684210526316 25.6286797196257
319.473684210526 26.6762956751182
345.263157894737 27.6134889433206
371.052631578947 28.4609344625232
396.842105263158 29.2340872704863
422.631578947368 29.9447768093078
448.421052631579 30.6022450127764
474.210526315789 31.2138421344241
500 31.7855056949937
};
\addlegendentry{N-BCD}
\addplot [semithick, red!54.5098039215686!black, mark=asterisk,  mark options={solid,fill opacity=0}]
table {%
35.7894736842105 18.2116510122633
61.5789473684211 18.2991561812261
87.3684210526316 18.4817876189684
113.157894736842 18.7640934264239
138.947368421053 19.4062384781266
164.736842105263 20.5438598789971
190.526315789474 21.8643193925342
216.315789473684 23.1815226088429
242.105263157895 24.4189189388626
267.894736842105 25.5553277150271
293.684210526316 26.5920691422009
319.473684210526 27.5381315193614
345.263157894737 28.4042055158916
371.052631578947 29.2004841400659
396.842105263158 29.9359769484089
422.631578947368 30.618408205386
448.421052631579 31.2543226056535
474.210526315789 31.8492488903557
500 32.4078642590026
};
\addlegendentry{O-3}
\addplot [semithick, color0, mark=triangle,  mark options={solid,rotate=180,fill opacity=0}]
table {%
35.7894736842105 5.91888602328006
61.5789473684211 7.99461802399403
87.3684210526316 11.3534625540658
113.157894736842 14.623586923877
138.947368421053 18.2715153139949
164.736842105263 21.6719580496172
190.526315789474 24.2884709503164
216.315789473684 26.3509094980524
242.105263157895 28.037762222476
267.894736842105 29.4596759340913
293.684210526316 30.6864924182727
319.473684210526 31.7643060891472
345.263157894737 32.7248781337768
371.052631578947 33.5909237833405
396.842105263158 34.3792076127021
422.631578947368 35.1024351097029
448.421052631579 35.7704549405065
474.210526315789 36.3910501047753
500 36.9704746485756
};
\addlegendentry{O-6}
\addplot [semithick, color1, mark=square,  mark options={solid,fill opacity=0}]
table {%
35.7894736842105 17.9729136021556
61.5789473684211 23.5565046978324
87.3684210526316 29.4233327349689
113.157894736842 33.5688591777302
138.947368421053 37.5547588202472
164.736842105263 41.0536827128097
190.526315789474 43.6983272051308
216.315789473684 45.7699798969681
242.105263157895 47.4598917297645
267.894736842105 48.8825637387634
293.684210526316 50.1092101993979
319.473684210526 51.1864743982358
345.263157894737 52.1463546634595
371.052631578947 53.0116733909708
396.842105263158 53.7992441626649
422.631578947368 54.5217941261696
448.421052631579 55.1891803435051
474.210526315789 55.809187703227
500 56.3880689929285
};
\addlegendentry{O-9}
\end{axis}

\end{tikzpicture}
        \caption{Total power spent.}
        \label{fig:opt_power_gu50}
    \end{subfigure}
    ~
    \begin{subfigure}[b]{0.45\textwidth}
        \centering
\begin{tikzpicture}

\definecolor{color0}{rgb}{1,0.270588235294118,0}
\definecolor{color1}{rgb}{1,0.549019607843137,0}

\begin{axis}[
legend cell align={left},
legend style={
  fill opacity=0.8,
  draw opacity=1,
  text opacity=1,
  at={(0.97,0.03)},
  anchor=south east,
  draw=white!80!black
},
tick align=outside,
tick pos=left,
x grid style={white!69.0196078431373!black},
xlabel={\(\displaystyle d_e\) [m]},
xmajorgrids,
xmin=0, xmax=500,
xtick style={color=black},
y grid style={white!69.0196078431373!black},
ylabel={\(\displaystyle P_{u}^\mathrm{tot}\) [dBm]},
ymajorgrids,
ymin=-29.2941418488635, ymax=26.7725991813794,
ytick style={color=black}
]
\addplot [semithick, blue, mark=x,  mark options={solid,fill opacity=0}]
table {%
35.7894736842105 -0.12272695556298
61.5789473684211 1.38871438553063
87.3684210526316 3.11046830752683
113.157894736842 5.15125030440302
138.947368421053 7.60033636448676
164.736842105263 10.1561691546778
190.526315789474 12.3328183543729
216.315789473684 14.1480958518888
242.105263157895 15.6823656130457
267.894736842105 17.0026716783306
293.684210526316 18.1577683851896
319.473684210526 19.1826230234462
345.263157894737 20.1026735209652
371.052631578947 20.9368153486435
396.842105263158 21.6993864146511
422.631578947368 22.4014845832489
448.421052631579 23.0518552905562
474.210526315789 23.6575014109351
500 24.224110952732
};
\addlegendentry{N-fea}
\addplot [semithick, green!50!black, mark=o,  mark options={solid,fill opacity=0}]
table {%
35.7894736842105 -0.514288303305455
61.5789473684211 1.0203306927916
87.3684210526316 2.91579551905174
113.157894736842 5.02799541688013
138.947368421053 7.39701350021641
164.736842105263 9.81273867932403
190.526315789474 11.8876451517701
216.315789473684 13.6342766914768
242.105263157895 15.1209806142585
267.894736842105 16.4069788034631
293.684210526316 17.536392749196
319.473684210526 18.5414041100367
345.263157894737 19.4457137199962
371.052631578947 20.2670913098615
396.842105263158 21.0191188188238
422.631578947368 21.7123701530012
448.421052631579 22.3552173804124
474.210526315789 22.9543907707433
500 23.5153750029077
};
\addlegendentry{N-BCD}
\addplot [semithick, white!41.1764705882353!black, dashed]
table {%
35.7894736842105 -26.7456536202161
61.5789473684211 -14.4734827822188
87.3684210526316 -6.53275538743987
113.157894736842 -1.36576207714548
138.947368421053 3.57481618588495
164.736842105263 7.6143392071526
190.526315789474 10.5002119591562
216.315789473684 12.6996866012527
242.105263157895 14.4668962140481
267.894736842105 15.9407858943918
293.684210526316 17.2036503315793
319.473684210526 18.3077880244289
345.263157894737 19.2883560302169
371.052631578947 20.1700688759443
396.842105263158 20.9709382438309
422.631578947368 21.7044848191614
448.421052631579 22.3811111225468
474.210526315789 23.0089888276433
500 23.5946523400084
};
\addlegendentry{SIC}
\addplot [semithick, black, dotted]
table {%
35.7894736842105 -9.21408635406441
61.5789473684211 -2.94394031591848
87.3684210526316 1.00461569352293
113.157894736842 3.93226312901758
138.947368421053 6.51625989899819
164.736842105263 8.87572129270786
190.526315789474 10.8569383007898
216.315789473684 12.5212250701889
242.105263157895 13.9418942288941
267.894736842105 15.1754746020553
293.684210526316 16.2628656990267
319.473684210526 17.2336847415863
345.263157894737 18.1097476402983
371.052631578947 18.9074577961382
396.842105263158 19.6394004385844
422.631578947368 20.3154127329626
448.421052631579 20.9433138170208
474.210526315789 21.5294125791889
500 22.0788678389638
};
\addlegendentry{IL}
\addplot [semithick, red!54.5098039215686!black, mark=asterisk,  mark options={solid,fill opacity=0}]
table {%
35.7894736842105 18.19463597062
61.5789473684211 18.19463597062
87.3684210526316 18.19463597062
113.157894736842 18.19463597062
138.947368421053 18.19463597062
164.736842105263 18.19463597062
190.526315789474 18.19463597062
216.315789473684 18.19463597062
242.105263157895 18.19463597062
267.894736842105 18.19463597062
293.684210526316 18.19463597062
319.473684210526 18.19463597062
345.263157894737 18.19463597062
371.052631578947 18.19463597062
396.842105263158 18.19463597062
422.631578947368 18.19463597062
448.421052631579 18.19463597062
474.210526315789 18.19463597062
500 18.19463597062
};
\addlegendentry{O-3}
\addplot [semithick, color0, mark=triangle,  mark options={solid,rotate=180,fill opacity=0}]
table {%
35.7894736842105 5.64938037170428
61.5789473684211 5.64938037170428
87.3684210526316 5.64938037170428
113.157894736842 5.64938037170428
138.947368421053 5.64938037170428
164.736842105263 5.64938037170428
190.526315789474 5.64938037170428
216.315789473684 5.64938037170428
242.105263157895 5.64938037170428
267.894736842105 5.64938037170428
293.684210526316 5.64938037170428
319.473684210526 5.64938037170428
345.263157894737 5.64938037170428
371.052631578947 5.64938037170428
396.842105263158 5.64938037170428
422.631578947368 5.64938037170428
448.421052631579 5.64938037170427
474.210526315789 5.64938037170427
500 5.64938037170427
};
\addlegendentry{O-6}
\addplot [semithick, color1, mark=square,  mark options={solid,fill opacity=0}]
table {%
35.7894736842105 1.344226896195
61.5789473684211 1.344226896195
87.3684210526316 1.344226896195
113.157894736842 1.344226896195
138.947368421053 1.344226896195
164.736842105263 1.344226896195
190.526315789474 1.344226896195
216.315789473684 1.344226896195
242.105263157895 1.344226896195
267.894736842105 1.344226896195
293.684210526316 1.344226896195
319.473684210526 1.344226896195
345.263157894737 1.34422689619499
371.052631578947 1.34422689619499
396.842105263158 1.34422689619499
422.631578947368 1.34422689619499
448.421052631579 1.34422689619499
474.210526315789 1.34422689619499
500 1.34422689619499
};
\addlegendentry{O-9}
\end{axis}

\end{tikzpicture}
        \caption{Power spent for URLLC allocation.}
        \label{fig:opt_urllc_power_gu50}
     \end{subfigure}
     \caption{Average results obtained as a function of eMBB distance $d_e$, $d_u = 146.9$ m.}
     \label{fig:powers_gu50}
\end{figure}
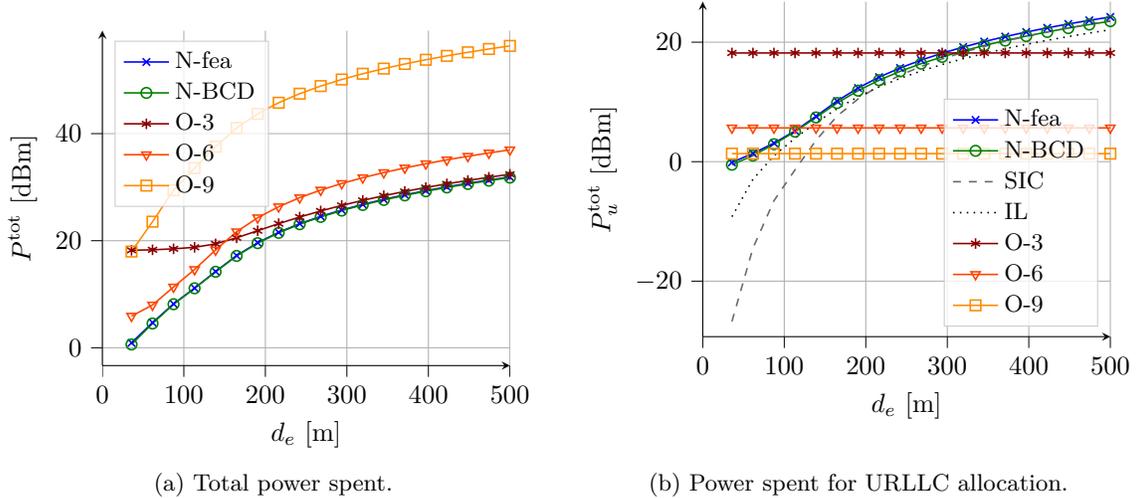

In Fig.~\ref{fig:powers_gu50}, we show the power spent as a function of $d_e$, fixing $d_u = 146.9$ m. 
In detail, Fig.~\ref{fig:opt_power_gu50} shows the overall power spent; also in this case, the NOMA schemes attain the best performance. Fig.~\ref{fig:opt_power_gu50} shows the URLLC power spent to meet the reliability requirement. In this case, the OMA power does not change for the different values of $d_e$, having fixed $d_u$. On the contrary, NOMA URLLC power is still influenced by the interference given by $\mb{P}_e$, and thus $P_u^\text{tot}$ increases when $d_e$ increases, accordingly. 


\begin{figure}
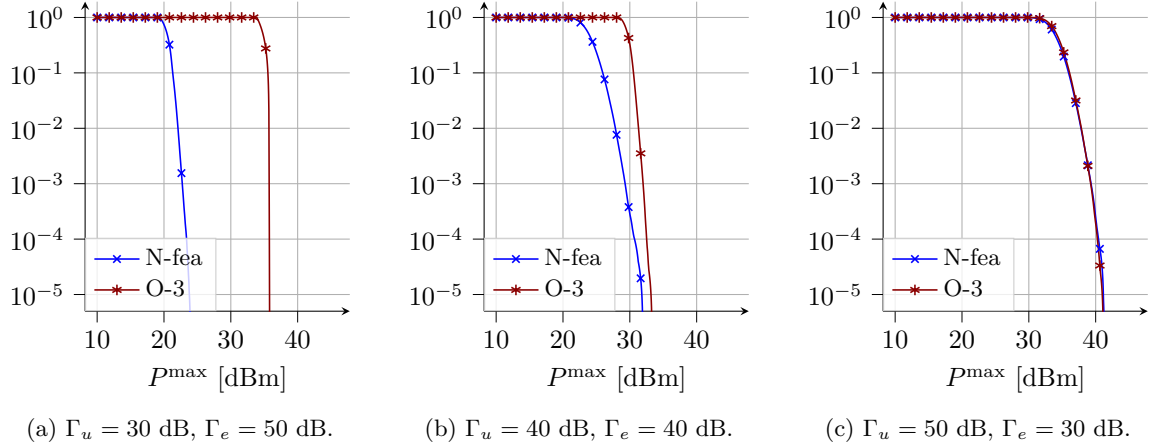

    \centering
    \begin{subfigure}[t]{0.3\textwidth}
        \centering
        \input{journalplots/PowerOutage_snru30_snre50}
        \caption{$\Gamma_u = 30$ dB, $\Gamma_e = 50$ dB.}
        \label{fig:PowerOutage3050}
    \end{subfigure}
    ~
    \begin{subfigure}[t]{0.3\textwidth}
        \centering
        \input{journalplots/PowerOutage_snru40_snre40}
        \caption{$\Gamma_u = 40$ dB, $\Gamma_e = 40$ dB.}
        \label{fig:PowerOutage4040}
     \end{subfigure}
     ~
     \begin{subfigure}[t]{0.3\textwidth}
        \centering
        \input{journalplots/PowerOutage_snru50_snre30}
        \caption{$\Gamma_u = 50$ dB, $\Gamma_e = 30$ dB.}
        \label{fig:PowerOutage5030}
     \end{subfigure}
     \caption{Probability of outage due to the power budget, for different value of $\Gamma_e$ and $\Gamma_u$.}
     \label{fig:PowerOutage}
\end{figure}

\FS{Finally, Fig.~\ref{fig:PowerOutage} shows the impact of considering the BS power budget $P^\text{max}$ for N-fea and O-3. We plot the probability that the total power allocated is higher than the power budget, i.e, $\Pr\{P^{\text{tot}} > P^{\text{max}}\}$, varying the latter, and with different combinations of $\Gamma_u$ and $\Gamma_e$. The NOMA approach can provide better performance if $\Gamma_u \le \Gamma_e$, i.e., when the eMBB is the \emph{strong} user, accordingly with the imposition of employing the SIC process at such user.}

\section{Conclusions \mm{and future directions}}
\label{sec:conclusions}
This paper studied the power and resource allocation for the downlink communication spectrum slicing of eMBB and URLLC traffic. We focused on orthogonal and non-orthogonal multiple access schemes employing the parallel channel model.
Due to the nature of the communication traffic types, we assumed that the CSI is only statistical for the URLLC, while the eMBB channel relies on instantaneous information.
We proposed a feasible and a BCD algorithm able to solve the spectrum slicing problem by minimizing the power spent, assuring at the same time the requirements of both kinds of traffic. We also compared the impact of using the aforementioned multiple access scheme for a 5G-like resource grid available.
Numerical results showed that the NOMA paradigm attains the best performance in almost all cases. The only exception is for a very close URLLC user, where the frequency diversity gain is negligible. In that case, OMA could attain the best performance depending on the position of the eMBB user. However, the performance gain w.r.t. NOMA is still negligible, proving that NOMA is a promising technology for spectrum slicing.
We also proved experimentally that the lookup table algorithm provides a close-to-the-optimal power allocation, becoming a possible candidate for a practical resource allocation approach.

This work is a first step to understanding the optimal design of a beyond 5G network enabling spectrum slicing on downlink communications. \FS{Based on our promising results and insights, our future work will investigate spectrum slicing in a more practical scenario with several users for the two types of traffic, considering both single-antenna and multiple-antenna technologies.}

\appendix
\section{Proof of Proposition~\ref{theo:non-increasing}}
\label{proof:non-increasing}
The $\log$ function is monotonic non-decreasing w.r.t. each $P_u(f)$. 
If we consider the vector $\mb{P}_u'$ obtained decreasing $P_u(f)$, $f\in\mc{F}_u$, by $\delta > 0$, while the other $P_u(g)$, $\forall g\in\mc{F}_u \setminus \{f\}$, and $P_e(j)$, $\forall j\in\mc{F}_u$, are kept the same, the mutual information results
\[
    F_u I_u(\mb{P}_u') = \log_2\left( 1 + \frac{\gamma_u(f) (P_u(f) - \delta)}{1 + \gamma_u(f) P_e(f)} \right) + \hspace{-2mm} \sum_{g\in\mc{F}_u\setminus \{f\}} \log_2\left( 1 + \frac{\gamma_u(f)P_u(f)}{1 + \gamma_u(f) P_e(f)} \right) \le F_u I_u(\mb{P}_u).
\]
Therefore, the outage probability may only result larger (or equal), i.e., $p_u(\mb{P}_u') = \Pr\{I_u(\mb{P}_u') \le r_u\} \ge p_u(\mb{P}_u) = \Pr\{I_u(\mb{P}_u) \le r_u\}$.
The same relation is in fact extended for each $\mb{0} \preceq \mb{P}_u' \preceq \mb{P}_u$, which completes the proof. \qed

\section{Proof of Proposition~\ref{theo:oma-optimality}}
\label{proof:oma-optimality}
For the OMA allocation, the power coefficient of $e$ user is zero for every resources given to $u$, i.e. $P_e(f) = 0$, $\forall f\in\mc{F}_u$. 
In these conditions of i.i.d. parallel channels without interference, the minimum outage probability is reached when the same power coefficient is allocated to each channel~\cite{Tse2005}.
As a matter of fact, Algorithm~\ref{alg:N-fea} outputs the power coefficient $P_u$ given by~\eqref{eq:fea:minP} for each mRB, due to the lack of SIC constraint. 
The resulting vector $\mb{P}_u^* = [P_u, \dots, P_u]\T$ is at least a feasible solution of problem~\eqref{op:oma}, due to Proposition~\ref{theo:feasible}.
The resulting mutual information is $I_u(\mb{P}_u^*, \mb{P}_e) = I_u(P_u, 0) = \frac{1}{F_u} \sum_{f\in\mc{F}_u} \log_2\left(1 + \gamma_u(f) P_u \right)$.
Following Proposition~\ref{theo:non-increasing}, the outage probability $p_u(\mb{P}_u^*)$ is monotonically non-increasing in $P_u$. 
Having obtained $P_u$ from~\eqref{eq:fea:minP}, there is no other power coefficient $P_u' \le P_u$ such as $\hat{p}_u(P_u', 0, \Gamma_u, \mc{F}_u, r_u) \le \epsilon_u$. 
If the table has been populated with enough different values of $P_u$, we will find that $p_u(P_u, 0, \Gamma_u, \mc{F}_u, r_u) = \epsilon_u$, and the optimal solution is reached. \qed

\section{Proof of Proposition~\ref{theo:convergence}}
\label{proof:convergence}
The objective function of problem~\eqref{eq:op:u} is a non-decreasing function with gradient equal to $\mb{1}$.
Hence, no saddle point can be found and the optimal solution lay on the border of the feasible set~\cite{Boyd}.
The power updating rule~\eqref{eq:pupdate} guarantees that the magnitude of vector $\mb{P}_u$ decreases at every iteration, following the minimization direction.
Let us suppose that the optimal solution is $\mb{P}_u^\text{o}$, and the Algorithm has stopped on point $\mb{P}_u^*$ when the threshold is set as $\tau'$.
If $\mb{P}_u^\text{o} = \mb{P}_u^*$, the proof is completed.
Let us now assume that the vector $\mb{P}_u^*$ has the same optimal coefficients except for one dimension $f$, i.e. $P_u^\text{o}(f) < P_u^*(f)$, and $P_u^\text{o}(g) = P_u^*(g)$ $\forall g \in\mc{F}_u \setminus \{f\}$.
This means that that exists a feasible point on dimension $f\in\mc{F}_u$ that will provide a smaller objective function. Hence, considering $\tau'' < \tau'$, using both~\eqref{eq:mupdate} and~\eqref{eq:pupdate}, it is possible to reduce $P_u^*(f)$ toward $P_u^\text{o}(f)$. Reducing $\tau''$, there exists a value of the threshold $0< \tau^* \le \tau''$ such as updating rules~\eqref{eq:mupdate} and~\eqref{eq:pupdate} will provide exactly $P_u^\text{o}(f)$.
The same results can be extended for the general case where $\mb{P}_u^\text{o}$ and $\mb{P}_u^*$ differ by more than one element, completing the proof. \qed

\bibliographystyle{IEEEtranNoUrl}
\bibliography{IEEEabrv, slicing.bib}
\end{document}